\newcommand\op{\operatorname{op}}
\newcommand\card{\operatorname{card\,}}
\newcommand\Set{\operatorname{\bf Set}}
\newcommand{\id}{\operatorname{id}}
\newcommand{\Id}{\operatorname{Id}}
\newcommand\colim{\operatorname{\it colim}}
\newcommand{\Ord}{\operatorname{Ord}}
\newcommand\cp{\mathcal {P}}
\newcommand\N{\mathbb{N}}
\newtheorem{theorem}{Theorem}[section]
\newtheorem{lemma}[theorem]{Lemma}
\newtheorem{prop}[theorem]{Proposition}
\newtheorem{corollary}[theorem]{Corollary}
\theoremstyle{definition}
\newtheorem{definition}[theorem]{Definition}
\newtheorem{example}[theorem]{Example}
\newtheorem{exam}[theorem]{Example}
\newtheorem{exs}[theorem]{Examples}
\newtheorem{remark}[theorem]{Remark}
\newtheorem{notation}[theorem]{Notation} 
\newtheorem{assumpt}[theorem]{Assumption}
\numberwithin{equation}{section}
\begin{document}

\title{On Free Completely Iterative Algebras}

\author[Ji{\v{r}}{\'{\i}} Ad{\'{a}}mek]{Ji{\v{r}}{\'{\i}} Ad{\'{a}}mek$^*$}

\address{\newline
Department of Mathematics,
Faculty of Electrical Engineering,
 \newline Czech Technical University in Prague, \newline Czech Republic} 
\email{j.adamek@tu-bs.de}

\subjclass{ }
\keywords{free algebra, completely iterative algebra, terminal coalgebra, finitary functor}

\begin{abstract}

For every finitary set functor $F$ we demonstrate that  free algebras carry a canonical partial order.
In case $F$ is bicontinuous, we prove that the cpo obtained as the   conservative  completion of the free algebra is the free completely iterative algebra. Moreover, the algebra structure of the latter is the unique continuous extension of the  algebra structure of the free algebra.

For general finitary functors the free algebra and the free completely iterative algebra are proved to be posets sharing  the same  conservative completion. And for every recursive equation $e$ in the free completely iterative algebra we present an $\omega$-chain of approximate solutions in the free algebra whose join is the solution of $e$.

\end{abstract}

\maketitle

\section{Introduction}\label{sec1}

\footnote{$^\ast$Supported by the Grant Agency of the Czech Republic under the grant 19-00902S.}

Recursion and iteration belong to the crucial concepts of theoretical computer science. An algebraic treatement was suggested by Elgot who introduced  iterative algebraic theories in \cite{E}. The corresponding concept for algebras over a given endofunctor $F$ was defined  by Milius \cite{M}: an algebra is called completely iterative  if every recursive equation  has a unique solution in it. We recall this in Section~5.
The free completely iterative theory of Elgot is then precisely
the algebraic theory corresponding to the free completely iterative algebras.
 Milius also described the free completely iterative algebra on a given object $X$: it is precisely the terminal coalgebra for the  endofunctor $F(-) + X$. This corresponds nicely to the fact  that the free  algebra on $X$ is precisely the initial algebra for $F(-) + X$.

In the present paper we study iterative  algebras for a finitary set functor $F$ (i.e., one preserving filtered colimits). We first  show that given a choice of  an element of $F\emptyset$, we obtain a canonical partial order  on the initial algebra $\mu F$  and on the terminal coalgebra $\nu F$. To illustrate this, consider the polynomial functor $H_\Sigma$ for a finitary signature $\Sigma$: here $\nu H_\Sigma$ is the algebra of all $\Sigma$-trees and $\mu H_\Sigma$ the subalgebra of all finite  $\Sigma$-trees. The ordering of $\nu H_\Sigma$ is `by cutting': for two $\Sigma$-trees $s$ and $s'$ we put $s<s'$ if $s$ is obtained from $s'$ by cutting, for a certain height, all nodes of larger heights away. This makes $\nu H_\Sigma$ a cpo which is 
the conservative completion  of the subposet $\mu H_\Sigma$. (The basic reason is that for every infinite $\Sigma$-tree  its cuttings $\partial _n s$ at level $n\in \N$ form an $\omega$-chain with $s= \sqcup \partial_n s$.) Now every finitary set functor can be presented as a quotient of a polynomial functor, see Section~4, and both  $\mu F$ and $\nu F$ inherit their orders from the order of $\Sigma$-trees by cutting. We prove that
\begin{enumerate}
\item[(a)] if $F$ is bicontinuous, i.e., it also preserves  limits  of $\omega^{\op}$-sequences, then $\nu F$ is a cpo which is the  conservative completion (see Remark \ref{R:ideal}) of $\mu F$, and
\item[(b)]
for finitary set functors in general $\nu F$ and $\mu F$ share the same conservative completion.
\end{enumerate}
Moreover, the coalgebra structure of  $\nu F$ is the unique continuous extension of the inverted algebra structure of $\mu F$. 
And for every coalgebra $A$ the unique homomorphism into $\nu F$ is a join of an $\omega$-chain of approximate homomorphisms $h_n\colon A \to \mu F$.
All this depends  on the choice of an element in $F\emptyset$.

We then apply this to a new description of the free completely iterative algebra on an arbitrary set $X\ne \emptyset$. We choose a variable in $X$ and obtain an order on $\Phi X$, the free algebra for $F$ on $X$, and one on $\Psi X$, the free completely iterative  algebra on $X$. We  prove that the  conservative completion of $\Phi X$ and $\Psi X$ coincide. And that  in case that $F$ is bicontinuous, $\Psi X$ is the conservative completion  of $\Phi X$. In both cases, the algebra structure of $\Psi X$ is the unique continuous  extension of that of $\Phi X$.
Moreover, solutions of recursive equations in $\Psi X$ can be obtained as joins of $\omega$-chains of so-called approximate solutions in $\Phi X$  obtained in a canonical manner.

\vskip 2mm
\textbf{Related Work.} 
We can work with complete metrics in place of complete partial orders. Barr proved that given a bicontinuous set functor $F$ with $F\emptyset \ne\emptyset$, there is a canonical complete metric on $\nu F$ which is the Cauchy completion of $\mu F$, see \cite{B}. This was extended in \cite{A2} to finitary set functors with $F\emptyset \ne \emptyset \colon \nu F$ and $\mu F$ have the same Cauchy completion, and the coalgebra structure of $\nu F$ is the unique continuous extension of the inverted algebra structure  of $\mu F$.

In the bicontinuous case a cpo structure of $\nu F$ was presented in \cite{A}. But the definition was quite technical; we recall this in Section~\ref{sec3}. One of the main results of the present paper that the order of $\nu F$ by  cutting (inherited from $\Sigma$-trees) coincides with that of op. cit.

\section{Polynomial Functors}\label{sec2}

We first illustrate our method on the special case: the \textit{polynomial functor} $H_\Sigma$ associated with a signature $\Sigma = (\Sigma_n)_{n\in \N}$. This is a set functor  given by
$$
H_\Sigma X =\coprod_{n\in \N} \Sigma_n \times X^n\,,
$$
and we represent the elements of the above set as `flat' terms $\sigma(x_1,\dots , x_n)$ where $\sigma \in \Sigma_n$ and $(x_i)\in X^n$.

\begin{remark}\label{R:psi} 
(1) A free algebra $\Phi_\Sigma X$ on a set is the algebra of all terms with variables in $X$. This can be represented  by finite trees as follows. A \textit{$\Sigma$-tree} is an ordered tree labelled in $\Sigma$ so that every node labelled in $\Sigma_n$ has precisely $n$ successors.
We consider $\Sigma$-trees up to isomorphism. Now given a set $X$  we form a new signature
$$
\Sigma_X= \Sigma+X
$$
 in which elements of $X$ have arity $0$.
A $\Sigma_X$-tree is called a \textit{$\Sigma$-tree over $X$}; its leaves are labelled by nullary symbols or variables from $X$.
 Then we get
 $$
 \Phi_\Sigma X = \ \mbox{all finite\ \ $\Sigma $-trees over $X$.}
$$
The algebra structure 
$$ 
\varphi \colon  H_\Sigma \big(\Phi_\Sigma X\big) \to \Phi_\Sigma X
$$
assigns to each  member $\sigma(t_1, \dots, t_n)$ (where $t_i$ are finite $\Sigma_X$-trees)
the $\Sigma_X$-tree with root labelled by $\sigma$ and with $n$ maximum proper subtrees $t_1,\dots , t_n$. Thus $\varphi^{-1}$ is tree tupling.

\vskip  1mm
(2)
The terminal coalgebra $\nu H_\Sigma$ can  analogously be  described  as the coalgebra of \textit{all\/} $\Sigma$-trees, the coalgebra operation is tree-tupling. For every set $X$ we denote by $\Psi_X$ the terminal  coalgebra of $H_{\Sigma_X} $ \ ($= H_\Sigma (-) + X$):
$$
\Psi_\Sigma X = \nu H_{\Sigma_X} = \nu (H_\Sigma +X)\,.
$$
It consists of all $\Sigma$-trees over $X$.
The coalgebra structure 
$$
\tau \colon \Psi_\Sigma X \to H_\Sigma (\Psi_\Sigma X)
$$
 assigns to a tree $t\in \Psi_\Sigma X$ either $x\in X$, if $t$ is a root-only tree labelled in $X$, or  $\sigma(t_1, \dots, t_n)$, if the root of $t$ is  labelled by $\sigma\in \Sigma_n$ and its successor subtrees are $t_1, \dots , t_n$. This is a free completely iterative algebra for $H_\Sigma$, see  Section~5.

\end{remark}

\begin{exam}\label{E:psi}
(1) If $\Sigma$ consists of a set $A$ of unary operation symbols, we have $H_\Sigma X=A\times X$. A tree in $\Psi_\Sigma X$ is either a finite unary tree over $X$ corresponding to an element of $A^\ast \times X$ (a leaf labelled in $X$, the other nodes labelled in $A$) or an infinite unary tree corresponding to a word in $A^\omega$:
$$
\Psi_\Sigma X = A^\ast \times X + A^\omega\,.
$$

(2) Let $\Sigma$ be a signature of one $n$-ary symbol for every $n\in \N$. Thus $H_\Sigma X = X^\ast$. A tree in $\Psi _\Sigma X$ does not need labels for inner nodes, and for leaves  we either have a label in $X$ or we consider the leaf unlabelled:
\begin{align*}
\Psi_\Sigma X =& \ \mbox{all finitely branching trees with leaves}\\
& \ \mbox{partially labelled in $X$}.
\end{align*}
\end{exam}

\begin{notation}\label{N:cut}
Let us choose an element $p\in X \cup \Sigma_0$. Then every tree $t$ in $\Psi_\Sigma X$ yields a tree $\partial_n t$ of height at most  $n$ by cutting all nodes of larger heights away and relabelling all leaves of height $n$ by $p$.
\end{notation}

\begin{definition}\label{D:cut}
We consider $\Psi_\Sigma X$ as a poset where for distinct trees $s$, $s'$ we put
$$
s<s' \quad \mbox{iff $s$ is a cutting of $s'$.}
$$

That is, $s= \partial_n s'$ for some $n\in \N$.
\end{definition}

\begin{example}\label{E:cut}

(1) For $H_\Sigma X= A\times X$ the subset $A^\omega$ of $\Psi_\Sigma X$ is discretely ordered. Given $(u, x)$ and $(v, y)$ in $A^\ast \times X$ then 
$$
(u,x) < (v,y) \quad \mbox{iff $u$ is a proper prefix  of $v$ and $x=p$.}
$$
Finally $(u,x)<w$, for $w\in A^\omega$, iff $u$ is a finite prefix of $w$ and $x=p$.

\vskip 2mm
(2) For $H_\Sigma X =X^\ast$ the set $\Psi_\Sigma X$ is ordered by cutting.
\end{example}

\begin{remark}\label{R:cut}
(a) Every tree $s$ in $\Psi_\Sigma X$ is a join of its cuttings:
$$
s=\bigsqcup
_{n\in \N} \partial _n s\,.
$$

(b) Every strictly increasing sequence $(s_n)_{n\in \N}$ in $\Psi_\Sigma X$ lies in $\Phi_\Sigma X$, i.e., each $s_n$ is finite. And  this sequence has a unique upper bound. Indeed, define $s\in \Phi_\Sigma X$ as follows: for every $k\in \N$ there exists $n\in \N$ such that all the trees $s_n, s_{n+1}, s_{n+2}, \dots$ agree up to height $k$. Then this is how $s$ is defined up to height $k$.

It is easy  to verify that $s$ is a well-defined $\Sigma$-tree over $X$. This is obviously an upper bound: to verify $s_m<  s$ for every $m$, one shows, for the height $k$ of the finite tree $s_m$,
 that $s_m$ and $s$ agree at that height, hence $s_m = \partial _k s$. Every other upper bound $s'$ agrees with $s$ on heights $0,1,2,\dots$ -- thus, $s=s'$.

\vskip 2mm
(c) Given a directed set $A \subseteq \Psi_\Sigma X$, all strictly increasing $\omega$-chains in $A$ have the same upper bound. Indeed, let $(s_n)$ and $(s'_n)$ be strictly increasing sequences in $A$, then since $A$ is directed, we can find a strictly increasing sequence $(s''_n)$ in $A$  such that each  $s''_n$ is an upper bound of $s_n$ and $s'_n$ for every $n$. The unique upper bound of that sequence is also an upper bound for $(s_n)$ and $(s'_n)$.
\end{remark}

\begin{corollary}\label{C:psi}
$\Psi_\Sigma X$ is a cpo, i.e., it has directed joins.
\end{corollary}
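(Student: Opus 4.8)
The plan is to take a directed set $A\subseteq \Psi_\Sigma X$ and produce its least upper bound. If $A$ has a greatest element, that element is $\bigsqcup A$ and there is nothing to do, so I would assume $A$ has no greatest element. The first step is to observe that such an $A$ must contain a strictly increasing $\omega$-chain: starting from an arbitrary $a_0\in A$ and given $a_n\in A$, the absence of a greatest element yields some $b\in A$ with $b\not\le a_n$, and directedness yields $a_{n+1}\in A$ above both $a_n$ and $b$; then $a_{n+1}\ne a_n$, so $a_n<a_{n+1}$. This recursion produces the desired chain.

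With such a chain in hand, I would apply Remark \ref{R:cut}(b) and (c): every strictly increasing $\omega$-chain in $A$ has a unique upper bound, and by directedness all of them have \emph{the same} upper bound. Call it $s$. The claim is that $s=\bigsqcup A$.

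Two things then have to be checked. First, $s$ is an upper bound of $A$: given $a\in A$, I would rerun the construction of the first paragraph with $a_0:=a$ (again using that $A$ has no greatest element) to obtain a strictly increasing $\omega$-chain $a=a_0<a_1<\cdots$ inside $A$; by Remark \ref{R:cut}(c) its upper bound is again $s$, hence $a\le s$. Second, $s$ is the least upper bound: any upper bound $t$ of $A$ is in particular an upper bound of some fixed strictly increasing $\omega$-chain in $A$, and by the uniqueness in Remark \ref{R:cut}(b) that chain has $s$ as its only upper bound, so $t=s$ and a fortiori $t\ge s$. (If one also wants the join of the empty directed set, it is the root-only tree labelled by $p$, which is the least element of $\Psi_\Sigma X$ since it equals $\partial_0 t$ for every $t$.)

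I expect the only real content to be part (i) of the verification — connecting an \emph{arbitrary} element $a\in A$ to the single canonical bound $s$ — and the trick there is precisely to restart a strictly increasing chain at $a$ and invoke Remark \ref{R:cut}(c); the extraction of a strictly increasing $\omega$-chain from a directed set without maximum is routine, and the minimality of $s$ is immediate from the uniqueness clause of Remark \ref{R:cut}(b).
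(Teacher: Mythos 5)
Your proposal is correct and follows essentially the same route as the paper: reduce to the case of no greatest element, extract a strictly increasing $\omega$-chain, and use Remark \ref{R:cut}(b),(c) together with directedness to show its unique upper bound dominates every element of $A$ and is the least (indeed only) upper bound. The paper realizes your step of "connecting an arbitrary $a\in A$ to $s$" by choosing a chain $s'_n\ge s_n$ with $a\le s'_0$, which is the same idea as restarting the chain at $a$.
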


Indeed, if a directed set $A\subseteq \Psi_\Sigma X$ has a largest element, then this is $\sqcup A$. Assuming the contrary, we can find a strictly increasing sequence $s_n \in A$. If $s$ is its upper bound, then $s=\sqcup A$. In fact, given $x\in A$, we can find a strictly increasing sequence $s'_n \geq s_n$ in $A$ with $x\leq s'_0$ (since $A$ is directed). Since $\sqcup s'_n$ is an upper bound of $(s_n)$, it follows that $\sqcup s'_n =s$. Thus, $s$ in an upper bound of $A$, and it is clearly the smallest one.

\begin{remark}\label{R:ideal}
(1) A monotone function between posets is called \textit{continuous} if it preserves all existing directed joins.

\vskip 1mm
 (2) Recall that a \textit{conservative completion} of a poset $P$  is
a cpo $\bar P$ containing $P$ as a subposet closed under existing directed joins with the following universal property:
 \begin{enumerate}
\item[]   For every continuous  function $f\colon P\to Q$,     where $Q$ is 
 a cpo, there exists a unique continuous extension $\bar f\colon \bar P \to Q$.
 \end{enumerate}
See \cite{BN}, Corollary 2, for the proof that $\bar P$ exists. 

\vskip 1mm
(3)
$\Psi_\Sigma X$ is a conservative completion  of $\Phi_\Sigma X$. Indeed, given a continuous function  $f\colon \Phi_\Sigma X \to Q$, define $\bar f \colon \Psi_\Sigma X \to Q$ by $\bar f(s)=\underset{n\in \N}{\sqcup} f(\partial_ns)$ for every tree $s$ in $\Psi_\Sigma X$. This extends $f$, and the proof of Corollary~\ref{C:psi} demonstrates that $\bar f$ is continuous. It is unique: from $s=\sqcup \partial_n s$ the formula  for $\bar f$ follows via continuity.
\end{remark}

\section{The limit $F^\omega 1$ as a cpo}\label{sec3}

In this section $F$ denotes a finitary set functor with $F\emptyset \ne \emptyset$. If we choose an element $p\colon 1\to F\emptyset$, then the limit $F^\omega =\lim\limits_{n\in\N} F^n 1$ of the terminal-coalgebra chain carries a structure of a cpo (a poset with directed  joins). This cpo was presented in \cite{A}, we recall this structure here and show in the next section a more intuitive description of that cpo ordering.

\begin{notation}\label{N:chain}

(1)  The initial algebra is denoted by $\mu F$ with the algebra structure $\varphi\colon F(\mu F) \to F$. The terminal coalgebra is denoted by $\nu F$ with the structure $\tau\colon \nu F\to F(\nu F)$.

(2) For the initial object $0$ (empty set)
 the unique morphism $i\colon 0\to F0$ yields an $\omega$-sequence of objects $F^n0$ ($n\in \N$) and connecting
morphisms $F^n i$ called the \textit{initial-algebra $\omega$-chain}. Its colimit is denoted by $F^\omega 0$ with the colimit cocone $i_n\colon F^n 0 \to F^\omega 0$. Since $F$ is finitary, $F^\omega 0$ is an initial algebra. The algebra structure $\varphi \colon  F(F^\omega 0)\to F^\omega 0$ is   the unique morphism with  $\varphi \cdot Fi_n = i_{n+1}$ for $n\in \N$. See \cite{A1}.

\vskip 1mm
(3) Dually, the unique morphism $t\colon F1 \to 1$ yields an $\omega^{\op}$-sequence of objects $F^n 1 $ ($n\in \N$)  and connecting morphisms $F^nt$, called the \textit{terminal coalgebra $\omega$-chain}. Its  limit is denoted by $F^\omega 1$ with the limit cone $t_n\colon F^\omega 1 \to F^n 1$.

\vskip 1mm
(4) The unique morphism $u\colon0 \to 1$ defines morphisms $F^n u \colon F^n0 \to F^n 1$. There exists a unique monomorphism $\bar u \colon F^\omega 0 \to F^\omega 1$ with $t_n \cdot \bar u \cdot i_n = F^n u$ ($n\in \N$), see \cite[Lemma 2.4]{A}.

\vskip 1mm
(5) Since $p\colon 1 \to F0$ has been chosen, we get morphisms
\begin{align*}
e_n &=\bar u \cdot i_{n+1} \cdot F^n p\colon F^n 1 \to F^\omega 0\,,\\
\intertext{and we define}
r_n&= e_n \cdot t_n \colon F^\omega 1 \to F^\omega 1\,.
\end{align*}

The following theorem is Theorem 3.3 in \cite{A}. The assumption, made in that paper, that $F$ is bicontinuous, was not used in the proof. Observe that the statement concerns the limit $F^\omega 1$ of which we do \textit{not} claim it is $\nu F$. 

\end{notation}

\begin{theorem}\label{T:chain} 

$F^\omega 1$ is a cpo w.r.t.\ the following  ordering
$$
x \sqsubseteq y \quad \mbox{iff}\quad x=y \quad \mbox{or}\quad x=r_n(y) \quad \mbox{for some $n\in \N $.}
$$
Every strictly increasing $\omega$-chain has a unique upper bound in $F^\omega 1$.
\end{theorem}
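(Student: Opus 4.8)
The plan is to analyze the operators $r_n = e_n \cdot t_n \colon F^\omega 1 \to F^\omega 1$ and show that they behave like a system of "truncation" or "approximation" operators, exactly as the cuttings $\partial_n$ do for $\Sigma$-trees in Section~2. First I would record the basic algebraic identities among the $t_n$, $e_n$, $r_n$. Since the $t_n$ form a limit cone over the terminal-coalgebra chain, $F^{n}t \cdot t_{n+1} = t_n$; since $\bar u$ is monic and compatible with the cocone/cone via $t_n \cdot \bar u \cdot i_n = F^n u$, one gets $t_n \cdot e_m$ computed levelwise. The key facts I expect to extract are: (i) $t_n \cdot r_n = t_n \cdot r_m$ whenever $m \geq n$ (the first $n$ levels of $r_m(y)$ are already determined by level $n$ of $y$), hence $r_n \cdot r_m = r_m \cdot r_n = r_n$ for $m \geq n$; (ii) $r_n$ is idempotent; (iii) for each fixed $y$, the values $r_n(y)$ eventually stabilize to $y$ "level by level", i.e. $t_k(r_n(y)) = t_k(y)$ once $n \geq k$ — equivalently $y = \lim_n r_n(y)$ in the limit $F^\omega 1$. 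These identities are the analogues of $\partial_n \partial_m = \partial_{\min}$ and $s = \sqcup_n \partial_n s$.

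Next I would verify that $\sqsubseteq$ as defined is a genuine partial order. Reflexivity is built in. For antisymmetry and transitivity one uses (i) and (ii): if $x = r_n(y)$ and $y = r_m(x)$ with $x \neq y$, then applying $t_{\max(n,m)}$ and using (iii) forces a contradiction, since $r_n(y)$ agrees with $y$ only up to level $n$ but the chain of reductions would have to lose and regain information; transitivity follows from $r_n \cdot r_m = r_{\min(n,m)}$ (on the relevant nested images). A clean way to package this: show that $x \sqsubset y$ iff $x = r_n(y)$ for the \emph{least} such $n$, and that in that case $t_n(x) = t_n(y)$ while $t_{n+1}(x) \neq t_{n+1}(y)$; then the order is essentially "$x$ is an initial segment of $y$", which is manifestly a partial order. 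One must also check the edge case where $r_n(y) = y$ for all $n$ (i.e. $y$ is in the image of $\bar u$, the "finite" elements): such $y$ are exactly the ones below nothing, i.e. the maximal-below-themselves situation does not arise — actually these are the elements with no proper predecessors \emph{and} possibly proper successors; I'd need to confirm $\mu F \cong F^\omega 0$ sits inside $F^\omega 1$ as precisely the sub-poset of elements all of whose $r_n$-images coincide with themselves, mirroring Remark~\ref{R:cut}(b).

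For the cpo structure and the uniqueness of upper bounds of strictly increasing $\omega$-chains, I would mimic Remark~\ref{R:cut} and Corollary~\ref{C:psi} directly. Given a strictly increasing chain $x_0 \sqsubset x_1 \sqsubset \cdots$, write $x_j = r_{n_j}(x_{j+1})$ with $n_0 < n_1 < \cdots$ strictly increasing (strictness of the chain forces the cutting levels to grow without bound, again by (iii)). Then the levels $t_{n_j}(x_j)$ are coherent under the connecting maps $F^n t$, so by the universal property of the limit $F^\omega 1 = \lim F^n 1$ there is a unique $x$ with $t_{n_j}(x) = t_{n_j}(x_j)$ for all $j$; this $x$ is an upper bound ($x_j = r_{n_j}(x)$), and it is the \emph{only} upper bound because any upper bound $x'$ satisfies $t_{n_j}(x') = t_{n_j}(x_j) = t_{n_j}(x)$ for all $j$, and the $n_j$ are cofinal, so $x' = x$ by limit-injectivity. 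Finally, a directed set either has a top element (which is its join) or contains a strictly increasing $\omega$-chain, and the argument of Corollary~\ref{C:psi} — replace an arbitrary element $x$ of the directed set by a strictly increasing chain above it and above the given chain — shows the chain's upper bound dominates the whole directed set and is least such; hence directed joins exist.

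The main obstacle I anticipate is step one: pinning down the precise interaction $t_k \cdot r_n$ and the idempotency/stabilization identities purely from the defining diagrams for $\bar u$, $i_n$, $t_n$, $p$. These are "levelwise" computations in the terminal-coalgebra chain, and the bookkeeping with the shift by one (note $e_n$ uses $i_{n+1}$ and $F^n p$) is where sign-of-index errors creep in; everything after that is a faithful transcription of the $\Sigma$-tree arguments of Section~2.
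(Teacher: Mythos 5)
The paper does not prove Theorem~\ref{T:chain} at all: it is quoted from \cite{A} (Theorem 3.3 there), with only the remark that the bicontinuity hypothesis of that paper is not needed. So there is no in-paper proof to compare against; judged on its own merits your proposal is correct, and it is essentially the argument one would expect (and the one the cited source runs): everything reduces to the identities $t_n\cdot e_n=\id_{F^n1}$ (whence $r_n$ is idempotent and $t_k\cdot r_n=t_k$ for $k\le n$) and $r_n\cdot r_m=r_m\cdot r_n=r_{\min(n,m)}$, after which the order-theoretic part is a faithful transcription of Remark~\ref{R:cut} and Corollary~\ref{C:psi}. The levelwise computations you flag as the danger zone do go through: $t_n\cdot\bar u\cdot i_{n+1}=F^nt\cdot t_{n+1}\cdot\bar u\cdot i_{n+1}=F^n(t\cdot Fu)$, and $t\cdot Fu\cdot p\colon 1\to 1$ is the identity by terminality of $1$, giving $t_n\cdot e_n=\id$; dually $F^np\cdot F^nu=F^n(p\cdot u)=F^ni$ by initiality of $0$, which gives $r_m\cdot e_n=e_n$ and hence $r_m\cdot r_n=r_n$ for $m\ge n$. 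Two minor remarks: your worry about the ``edge case'' $r_n(y)=y$ is a non-issue, since reflexivity is built into the definition of $\sqsubseteq$ and antisymmetry/transitivity follow mechanically from $r_nr_m=r_{\min(n,m)}$ without any case analysis on which elements lie in $\bar u[F^\omega 0]$; and in the uniqueness-of-upper-bounds step you should make explicit that the exponents $m_j$ witnessing $x_j\sqsubseteq x'$ are forced to be strictly increasing by the same antisymmetry argument as for the $n_j$, so that both index families are cofinal and the collective injectivity of the limit cone $(t_k)$ applies.
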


\begin{example}\label{E:chain} 
(1) For $F= H_\Sigma$ we have  $F^\omega  1 =\nu H_\Sigma$, all $\Sigma$-trees. Recall our choice of $p \in F0 =\Sigma_0$. The ordering $\sqsubseteq$ above is precisely that by cutting, see Definition~\ref{D:cut}.

Indeed, $\bar u \colon \mu H_\Sigma \to \nu H_\Sigma$ is just the inclusion map. If we put $1=\{p\}$, then $H_\Sigma 1$ consists of $\Sigma$-trees  $\sigma(p, \dots , p)$ or $\sigma \in \Sigma_0$  of height at most $1$ with leaves labelled by $p$. More generally, $H_\Sigma ^n 1$ consists of $\Sigma$-trees of height at most $n$ with leaves of height $n$ labelled by $p$. The function $e_n\colon H_\Sigma ^n 1\to \mu H_\Sigma$ is the inclusion map, hence, $r_n$ is the cutting function $\partial_n$ of Section~\ref{sec2}.

\vskip 1mm
(2) For the finite power-set functor $\cp_f$ we have  $\cp_f0 =\{\emptyset\}$, thus the chosen element is $p=\emptyset$. Recall that a non-ordered tree is called \textit{extensional} if for every node all maximum subtrees are pairwise distinct (i.e., non-isomorphic). Every tree has an \textit{extensional quotient}    obtained  by recursively identifying equal maximum subtrees of every node.

In the initial-algebra chain, $\cp_f^n 0$ can be described as the set of all extensional trees of height at most $n$ (and $\cp_f^n i$ are the inclusion maps). Hence $\cp_f^\omega 0= \bigcup\limits_{n\in\N} \cp_f^n 0$ is the set of all finite extensional trees.

Worrell proved that $\cp^\omega_f 1$ can be described as the set of all compactly branching \textit{strongly extensional} trees, see \cite{W}. (Given a tree $s$,  a relation $R$ on its nodes is called a \textit{tree  bisimulation} if (a) it only relates nodes of the same height and (b) given 
$xRy$, then for every successor $x'$ of $x$ there is a successor $y'$ of $y$  with $x'R y'$, and vice versa. A tree is called  strongly extensional if every tree bisimulation is  contained in the diagonal relation.)
\end{example}

\begin{remark}\label{R:new}
Observe that each $r_n$ factorizes through $\mu F$: we have morphisms
$$
\partial_n \colon \nu F\to \mu F \quad \mbox{with} \quad r_n =\bar u \cdot \partial_n\,.
$$
Indeed, put $\partial_n= i_{n+1} \cdot F^n p\cdot t_n$.

\end{remark}

\begin{notation}[See \cite{A1}]\label{N:not}
The \textit{initial-algebra chain} for $F$ beyond the above finitary iterations is the following chain indexed by all ordinals $n$: on objects define $F^n0$ by $
F^00 =0$, $F^{n+1} 0 = F(F^n0)$ {and} $ F^k 0= \colim\limits_{n<k} F^n0 $
for limit ordinals $k$.
The connecting morphisms are denoted by $i_{n,k} \colon F^n0\to F^k0 $ ($n\leq k$). We have $i_{0,1} \colon 0\to F0$ unique, $i_{n+1, k+1} = Fi_{n,k}$, and for limit ordinals $k$ the cocone $(i_{n,k})_{n<k}$ is a colimit cocone.

Dually, the \textit{terminal-coalgebra chain} indexed by $\Ord^{\op}$ has objects $F^n1$ with $F^0 1 =1$, $F^{n+1} 1= F(F^n1)$ and $F^k 1 =\underset{k>n}{\operatorname{\it lim\ }} F^n 1$. And it has  connecting morphisms $t_{n,k}$ with $t_{1,0}$ unique, $t_{n+1, k+1}= Ft_{n,k}$ and $(t_{n,k})_{k>n}$ the  limit cone if $k$ is a limit ordinal. In our notation above we thus have $t= t_{1,0}$, $Ft=t_{2,1}$, etc.
\end{notation}

\begin{lemma}\label{L:hat}
 Every  natural transformation $\varepsilon \colon H\to F$ between en\-do\-functors induces

{\rm (1)} a unique natural transformation  $\hat \varepsilon_n \colon H^n 1\to F^n 1$ ($n\in \Ord$) between their terminal-coalgebra chains satisfying
 $$
 \hat \varepsilon_{n+1} \equiv H(H^n 1)\xrightarrow{\ \varepsilon_{H^n 1}\ } F(H^n1) \xrightarrow{\ F\hat\varepsilon_n\ }
 F(F^n 1)\,,
$$
and

{\rm (2)} a unique  natural transformation $\tilde\varepsilon_n \colon H^n 0 \to F^n 0$ ($n\in \Ord$) between their initial-algebra chains satisfying
$$
\tilde \varepsilon_{n+1} \equiv H(H^n 0) \xrightarrow{\ \varepsilon_{H^n 0}\ } F(H^n  0) \xrightarrow{\ F \tilde\varepsilon_n\ } F(F^n 0)\,.
$$ 

 \end{lemma}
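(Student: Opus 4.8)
The plan is to construct both families by transfinite recursion on $n$ and verify the required coherence and naturality conditions by induction. I will spell out part (1); part (2) is entirely dual, replacing limits of $\omega^{\op}$-type diagrams by colimits of filtered diagrams and reversing the connecting morphisms, so I would merely indicate the dualization at the end.

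For part (1), define $\hat\varepsilon_0 \colon H^0 1 = 1 \to 1 = F^0 1$ to be the identity. For the successor step, set $\hat\varepsilon_{n+1}$ to be the composite $F\hat\varepsilon_n \cdot \varepsilon_{H^n 1} \colon H(H^n 1) \to F(F^n 1)$ prescribed in the statement; there is no choice here. For a limit ordinal $k$, observe that the morphisms $\hat\varepsilon_n \cdot t^{H}_{n,k} \colon H^k 1 \to F^n 1$ for $n < k$ form a cone over the terminal-coalgebra chain of $F$ truncated below $k$ — this is exactly the inductive coherence hypothesis that the square
\[
\xymatrix{
H^m 1 \ar[r]^{\hat\varepsilon_m} \ar[d]_{t^{H}_{n,m}} & F^m 1 \ar[d]^{t^{F}_{n,m}}\\
H^n 1 \ar[r]_{\hat\varepsilon_n} & F^n 1
}
\]
commutes for all $n \le m < k$ — and since $F^k 1$ is the limit of that truncated chain, there is a unique $\hat\varepsilon_k \colon H^k 1 \to F^k 1$ with $t^{F}_{n,k} \cdot \hat\varepsilon_k = \hat\varepsilon_n \cdot t^{H}_{n,k}$ for all $n < k$. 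This simultaneously defines $\hat\varepsilon_k$ and extends the coherence relation through $k$.

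The two things that need to be checked by the same transfinite induction are: (i) the coherence squares above commute for all $n \le m$ (not just below a fixed limit), and (ii) uniqueness. For (i), the successor case $m \mapsto m+1$ uses naturality of $\varepsilon$ applied to the connecting morphism $t^{H}_{n,m}$ together with functoriality of $F$ and the induction hypothesis; the limit case is immediate from the defining cone property of $\hat\varepsilon_k$. Uniqueness is forced at every stage: the identity at $0$, the prescribed formula at successors, and the universal property of the limit at limit ordinals. The main obstacle — and it is a mild one — is bookkeeping the limit step correctly: one must verify that the candidate morphisms $\hat\varepsilon_n \cdot t^{H}_{n,k}$ genuinely constitute a cone before invoking the universal property, which is precisely where the inductive coherence hypothesis is consumed, and one must then re-establish coherence past $k$. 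Since $F$ and $H$ are arbitrary endofunctors here (finitariness plays no role in this lemma), no colimit-preservation is needed and the recursion goes through unobstructed.

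For part (2), dualize: $\tilde\varepsilon_0 \colon 0 \to 0$ is the unique map, $\tilde\varepsilon_{n+1} = F\tilde\varepsilon_n \cdot \varepsilon_{H^n 0}$, and at a limit ordinal $k$ the morphisms $\tilde\varepsilon_n$ together with the colimit injections $i^{F}_{n,k}$ yield a cocone $i^{F}_{n,k} \cdot \tilde\varepsilon_n \colon H^n 0 \to F^k 0$ under the initial-algebra chain of $H$, inducing the unique $\tilde\varepsilon_k \colon H^k 0 \to F^k 0$. Naturality/coherence and uniqueness follow exactly as before with all arrows reversed.
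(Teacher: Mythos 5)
Your proposal is correct and follows essentially the same route as the paper's proof: identity (unique map) at stage $0$, the prescribed composite at successors, and the universal property of the limit (resp.\ colimit) at limit ordinals, with the coherence squares carried along by transfinite induction using naturality of $\varepsilon$ and functoriality of $F$ at the successor step. Nothing further is needed.
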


\begin{proof}
We present the proof of (1), that of (2) is completely analogous.

Denote by $t_{n,k}$ and $t'_{n,k}$ the connecting morphisms of the terminal-coalgebra chains for $F$ and $H$, resp.

We have $\hat \varepsilon_0 \colon 1\to 1$ unique, and $\hat\varepsilon_1 =\varepsilon_1 \colon H1 \to F1$ is also unique.
The first naturality square
$$
\xymatrix{
H1\ar[r]^{t'_{1,0}} \ar[d]_{\hat\varepsilon_1} & 1 \ar[d]^{\hat\varepsilon_0}\\
F1\ar [r]_{t_{1,0}} & 1
}
$$
trivially commutes.

Given $\hat\varepsilon_n$, then $\hat\varepsilon_{n+1}$ is uniquely determined  by the above formula. And every naturality square for $n$
 $$
\xymatrix{
H^n1\ar[r]^{t'_{n,m}} \ar[d]_{\hat\varepsilon_n} & H^m1 \ar[d]^{\hat\varepsilon_m}\\
F^n1\ar [r]_{t_{n,m}} &F^m 1
}
\quad (m\leq n)
$$
yields the following naturality square for $n+1$:
$$
\xymatrix{
H^{n+1}1\ar[rrr]^{H t'_{n,m}} \ar[dd]_{\hat\varepsilon_{n+1}} \ar[rd]^{\varepsilon_{H^n 1}}
&&&
 H^{m+1} 1 \ar[dd]^{\hat\varepsilon_{m+1}} \ar[ld]_{\varepsilon_{H^m1}}\\
& F(H^n 1) \ar[r]^{F{ t'_{n,m}}} \ar[ld]_{F\hat\varepsilon_n} & \ar[dr]_{F\hat\varepsilon_m}&\\
F^{n+1} 1\ar [rrr]_{ Ft_{n,m}} &&&F^{m+1} 1
}
$$
Indeed, the upper part commutes since $\varepsilon \colon H\to F$ is natural, and for  the lower one apply $F$ to the square above.

Thus, all we need proving is that given a limit  ordinal $k$ for which all the above squares with $m\leq n<k$ commute, there is a unique $\hat \varepsilon_k \colon H^k 1 \to F^k 1$ making the following squares
$$
\xymatrix{
H^k1\ar[r]^{t'_{k,n}} \ar[d]_{\hat\varepsilon_k} & H^n1 \ar[d]^{\hat\varepsilon_n}\\
F^k1\ar [r]_{t_{k,n}} &F^n 1
}
\quad (n<k)
$$
commutative. The morphism $\hat \varepsilon_n \cdot t'_{k,n}$ for all $n<k$ form a cone  of the $k$-chain with limit $F^k 1$, i.e., we  have, for each $n>m$, the following commutative triangle
$$
\xymatrix@C=1pc @R=3pc{
&& H^k 1 \ar[dl]_{t'_{k,n}} \ar[dr]^{t'_{k,m}}&&\\
& H^n 1 \ar[ld]_{\hat\varepsilon_n} \ar@{-->}[rr]_{t'_{n,m}} &&
 H^m1 \ar[dr]^{\hat \varepsilon_m}&\\
 F^n 1 \ar[rrrr]_{t_{n,m}} &&&& F^m 1
}
 $$
  Thus, $\hat \varepsilon_k$ is uniquely determined by the above commutative squares.
\end{proof} 

\begin{remark}\label{R:simple} %
$\hat\varepsilon_\omega\colon H^\omega 1 \to F^\omega 1$ is the unique morphism satisfying $\hat \varepsilon_n \cdot t'_n = t_n \cdot \hat\varepsilon_\omega$ for every $n\in \N$. Indeed, this follows from the above proof since $t_n = t_{\omega, n}$ and $t'_n= t'_{\omega,n}$. Analogously, $\tilde \varepsilon_\omega\colon H^\omega 0 \to P^\omega 0$ is the unique morphism satisfying $\tilde \varepsilon \cdot i'_n = i_n\cdot \tilde \varepsilon_n$ for every $n\in \N$.
\end{remark}

\begin{remark}\label{R:W} %
Recall the description of the terminal coalgebra of a finitary set functor $F$ due to Worrell \cite{W}:
\begin{enumerate}
\item[(a)] All connecting morphisms
$t_{n, \omega}$ with $n\geq \omega$
 are monic, thus, $F^{\omega +\omega} 1 = \bigcap\limits_{n\in \N} F^{\omega+ n} 1$;

\item[(b)] $F^{\omega +\omega} 1$ is the terminal coalgebra whose coalgebra structure is inverse to $t_{\omega + \omega +1, \omega +\omega}$.
\end{enumerate}
\end{remark}

\begin{exam}\label{E:W} 
For $\cp_f$ (see  \ref{E:chain}(2)) the subset $\cp^{\omega +n} 1$ of $\cp^{\omega} 1$ consists of all strongly extensional  compactly branching trees which are finitely branching at all levels up to $n-1$. Thus, $\bigcap\limits_{n\in \N} \cp^{\omega +n} 1$ is the set $\nu \cp_f$ of all finitely branching strongly extensional trees in $\cp_f^\omega 1$. This was proved in \cite{W}.
\end{exam}

\begin{remark}\label{R:mono} 
Since $\mu F$ can be viewed as a coalgebra for $F$ (via $\varphi^{-1}$), we have a unique coalgebra homomorphism
$$
m\colon \mu F \to \nu F \quad \mbox{with}\quad \tau \cdot m = Fm \cdot \varphi^{-1}\,.
$$
This is monic for every finitary set functor, see \cite[Proposition 5.1]{A2}.
\end{remark}

We thus can consider $\mu F$ as a subset of $\nu F$ and $m$ as the inclusion map.

Since both $H_\Sigma$ and $F$ are finitary functors, we have the morphism $\tilde \varepsilon_\omega \colon \mu H_\Sigma \to \mu F$ of Lemma~\ref{L:hat}.

\begin{lemma}\label{L:basic}
$\tilde \varepsilon_\omega \colon(\mu H_\Sigma,\varphi') \to (\mu F, \varphi \cdot \varepsilon_{\mu F})$  is a  homomorphism of algebras for $H_\Sigma$. Consequently, $\tilde \varepsilon_\omega$ is a restriction of $\hat k$, i.e., we have 
$\hat k\cdot m'=m\cdot \tilde \varepsilon_\omega\colon \mu H_\Sigma\to \nu F$.
\end{lemma}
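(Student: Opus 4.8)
The plan is to prove the two assertions in turn, each by ``evaluating on the appropriate (co)limit (co)cone''.

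For the first assertion I want to establish
$$
\tilde\varepsilon_\omega\cdot\varphi'=\varphi\cdot\varepsilon_{\mu F}\cdot H_\Sigma\tilde\varepsilon_\omega\colon H_\Sigma(\mu H_\Sigma)\to\mu F\,.
$$
Since $H_\Sigma$ is finitary, $H_\Sigma(\mu H_\Sigma)=\colim_{n\in\N}H_\Sigma(H_\Sigma^n0)$ with colimit cocone $\big(H_\Sigma i'_n\big)_{n\in\N}$, so it is enough to verify this identity after precomposition with each $H_\Sigma i'_n$. On the left-hand side I would use $\varphi'\cdot H_\Sigma i'_n=i'_{n+1}$, then Remark~\ref{R:simple} in the form $\tilde\varepsilon_\omega\cdot i'_{n+1}=i_{n+1}\cdot\tilde\varepsilon_{n+1}$, and finally the recursive formula $\tilde\varepsilon_{n+1}=F\tilde\varepsilon_n\cdot\varepsilon_{H_\Sigma^n0}$ of Lemma~\ref{L:hat}(2); this produces $i_{n+1}\cdot F\tilde\varepsilon_n\cdot\varepsilon_{H_\Sigma^n0}$. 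On the right-hand side I would use $\tilde\varepsilon_\omega\cdot i'_n=i_n\cdot\tilde\varepsilon_n$, then naturality of $\varepsilon$ at $i_n\colon F^n0\to\mu F$ to rewrite $\varepsilon_{\mu F}\cdot H_\Sigma i_n$ as $Fi_n\cdot\varepsilon_{F^n0}$, then $\varphi\cdot Fi_n=i_{n+1}$, and finally naturality of $\varepsilon$ at $\tilde\varepsilon_n\colon H_\Sigma^n0\to F^n0$ to rewrite $\varepsilon_{F^n0}\cdot H_\Sigma\tilde\varepsilon_n$ as $F\tilde\varepsilon_n\cdot\varepsilon_{H_\Sigma^n0}$; this yields the same term $i_{n+1}\cdot F\tilde\varepsilon_n\cdot\varepsilon_{H_\Sigma^n0}$, so the two sides agree.

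For the ``consequently'' part I pass to $F$-coalgebras. Equip $\mu H_\Sigma$ with the structure $\beta=\varepsilon_{\mu H_\Sigma}\cdot(\varphi')^{-1}$, equip $\nu H_\Sigma$ with $\gamma=\varepsilon_{\nu H_\Sigma}\cdot\tau'$ (where $\tau'$ is the terminal-coalgebra structure of $\nu H_\Sigma$), and equip $\mu F$ with $\varphi^{-1}$. I would then check that all four edges of
$$
\xymatrix{
\mu H_\Sigma \ar[r]^{m'} \ar[d]_{\tilde\varepsilon_\omega} & \nu H_\Sigma \ar[d]^{\hat k}\\
\mu F \ar[r]_{m} & \nu F
}
$$
are $F$-coalgebra homomorphisms: $m$ by Remark~\ref{R:mono}; $m'$ because it is an $H_\Sigma$-coalgebra homomorphism (the $H_\Sigma$-instance of Remark~\ref{R:mono}), whence, composing with $\varepsilon$ and using naturality, an $F$-coalgebra homomorphism $(\mu H_\Sigma,\beta)\to(\nu H_\Sigma,\gamma)$; $\hat k$, being the morphism induced by $\varepsilon$ between the terminal coalgebras, an $F$-coalgebra homomorphism $(\nu H_\Sigma,\gamma)\to(\nu F,\tau)$; and $\tilde\varepsilon_\omega\colon(\mu H_\Sigma,\beta)\to(\mu F,\varphi^{-1})$ because the required identity $\varphi^{-1}\cdot\tilde\varepsilon_\omega=F\tilde\varepsilon_\omega\cdot\beta$ is precisely the first assertion rewritten via naturality of $\varepsilon$ at $\tilde\varepsilon_\omega$. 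Consequently $m\cdot\tilde\varepsilon_\omega$ and $\hat k\cdot m'$ are both $F$-coalgebra homomorphisms $(\mu H_\Sigma,\beta)\to(\nu F,\tau)$, hence equal by finality of $\nu F$.

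I expect no genuine obstacle here: the whole argument is bookkeeping of naturality squares for $\varepsilon$ together with the universal properties of $\mu H_\Sigma,\mu F$ and $\nu F$. The step deserving the most care is the bridge between the two parts, namely recognising that ``$\tilde\varepsilon_\omega\colon(\mu H_\Sigma,\varphi')\to(\mu F,\varphi\cdot\varepsilon_{\mu F})$ is an $H_\Sigma$-algebra homomorphism'' is the same statement as ``$\tilde\varepsilon_\omega\colon(\mu H_\Sigma,\beta)\to(\mu F,\varphi^{-1})$ is an $F$-coalgebra homomorphism''; this forces one to keep track simultaneously of $\varphi'$, $\varphi$ and the two components $\varepsilon_{\mu H_\Sigma},\varepsilon_{\mu F}$, and to invoke naturality of $\varepsilon$ at the ``large'' morphism $\tilde\varepsilon_\omega$ rather than at a chain connecting morphism.
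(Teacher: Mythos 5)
Your proof is correct. For the first assertion you do exactly what the paper does: precompose with the colimit cocone $\big(H_\Sigma i'_n\big)_{n\in\N}$, use $\varphi'\cdot H_\Sigma i'_n=i'_{n+1}$, the characterisation of $\tilde\varepsilon_\omega$ from Remark~\ref{R:simple}, the recursion for $\tilde\varepsilon_{n+1}$ from Lemma~\ref{L:hat}, and naturality of $\varepsilon$; the only cosmetic difference is that you compute the two sides separately and meet at $i_{n+1}\cdot F\tilde\varepsilon_n\cdot\varepsilon_{H_\Sigma^n0}$, whereas the paper writes one unbroken chain of equalities. For the ``consequently'' part you genuinely diverge: the paper stays in the category of $H_\Sigma$-algebras, verifies (by inverting $\tau$ and $\tau'$ and using naturality of $\varepsilon$) that $m$ and $m'$ are $H_\Sigma$-algebra homomorphisms into $(\nu F,\tau^{-1}\cdot\varepsilon_{\nu F})$ and $(\nu H_\Sigma,(\tau')^{-1}\cdot\varepsilon_{\nu H_\Sigma})$ respectively, and then concludes $\hat k\cdot m'=m\cdot\tilde\varepsilon_\omega$ from initiality of $(\mu H_\Sigma,\varphi')$; you instead transport everything to $F$-coalgebras (inverting $\varphi'$ rather than $\tau$) and conclude from terminality of $(\nu F,\tau)$. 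Both arguments are sound and of comparable length. Your version has the small advantage that the homomorphism property of $\hat k$ enters exactly in the form stated in Remark~\ref{R:q}, with no need to invert $\tau$; the paper's version keeps the whole lemma inside algebra language, which matches how the statement itself is phrased and how it is later used. Your identification of the $H_\Sigma$-algebra homomorphism condition for $\tilde\varepsilon_\omega$ with the $F$-coalgebra homomorphism condition $(\mu H_\Sigma,\beta)\to(\mu F,\varphi^{-1})$ via naturality of $\varepsilon$ at $\tilde\varepsilon_\omega$ is the one step with no counterpart in the paper, and it is carried out correctly.
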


\begin{proof}
(1) To verify
that  $\tilde \varepsilon_\omega$ is a homomorphism, i.e.,
 $\tilde \varepsilon_\omega\cdot \varphi' =\varphi \cdot \varepsilon_{\mu F} \cdot H_\Sigma \tilde \varepsilon_\omega$, we use the fact that the colimit cocone $(i'_n)_{n\in \N}$ yields a colimit cocone $(H_\Sigma i'_n)_{n\in \N}$. And each $H_\Sigma i'_n$ merges the two sides of our equation:
\begin{align*}
\tilde \varepsilon_{\omega} \cdot \varphi' \cdot H_\Sigma i'_n &=\tilde \varepsilon_{\omega} \cdot i'_{n+1}
 &&\mbox{(definition of $\varphi'$)}\\[4pt]
&=  i_{n+1} \cdot \tilde \varepsilon_{n+1}
&& \mbox{(definition of $\tilde \varepsilon_{\omega}$)}\\[4pt]
&= \varphi\cdot F i_n \cdot \tilde \varepsilon_{n+1}
 && \mbox{(definition of $\varphi$)}\\[4pt]
&= \varphi \cdot F(i_n\cdot \tilde \varepsilon_n)\cdot \varepsilon_{F^n 0}
&& \mbox{(definition of $\tilde\varepsilon_{n+1}$)}\\[4pt]
&= \varphi \cdot \varepsilon_{\mu F} \cdot H_\Sigma (i_n\cdot \tilde \varepsilon_n)
 && \mbox{($\varepsilon$ natural)}\\[4pt]
 &= \varphi \cdot \varepsilon_{\mu F} \cdot H_\Sigma \tilde \varepsilon_\omega \cdot H_\Sigma i'_n
  && \mbox{(definition of $\tilde\varepsilon_\omega$)}\,.
\end{align*}

(2) We observe that $m$ and $m'$ are homomorphisms of algebras  for $H_\Sigma$. Indeed, $\tau \cdot m= Fm \cdot \varphi^{-1}$ in Remark~\ref{R:mono} yields
$$
m\cdot (\varphi \cdot \varepsilon_{\mu F}) = \tau^{-1} \cdot Fm \cdot \varepsilon_{\mu F} = (\tau^{-1} \cdot \varepsilon_{\nu F}) \cdot H_\Sigma m\,,
$$
 analogously for $m'$. Due to (1) this shows that $ m\cdot \tilde \varepsilon_{\omega} \colon (\mu H_\Sigma, \varphi') \to (\nu F, \tau^{-1} \cdot \varepsilon_{\nu F})$ is a homomorphism for $H_\Sigma$. So is $\hat k\cdot m'$, thus the initiality of $\mu F$ yields $\hat k\cdot m' = m\cdot \tilde \varepsilon_{\omega} $.
\end{proof}


\section{The Order by Cutting}\label{sec4}

We have seen in Section~\ref{sec2} that for  polynomial  functors the terminal coalgebra $\nu H_\Sigma$ is a cpo when ordered by cutting of the $\Sigma$-trees. In the present section we represent  an arbitrary finitary set functor $F$ as a quotient of some $H_\Sigma$. This will enable us to introduce an order by cutting on $\nu F$ and $\mu F$. We then prove the following, whenever $F\emptyset \ne \emptyset$:
\begin{enumerate}
\item[(a)] if $F$ is \textit{bicontinuous}, i.e., preserves also limits of $\omega^{\op}$-chains, then $\nu F$ is a cpo which is the conservative completion  of $\mu F$,
\end{enumerate}
\noindent
and
\begin{enumerate}
\item[(b)] for $F$ in general $\nu F$ and $\mu F$ share the same conservative completion.
\end{enumerate}

\begin{definition}\label{D:pres} 
By a \textit{presentation} of a set functor $F$ is  meant a finitary signature $\Sigma$ and a natural transformation $\varepsilon \colon H_\Sigma 
\twoheadrightarrow F$ with epic components.
\end{definition}

\begin{prop}[See \cite{AT}]\label{P:pres} 
A set functor has a presentation iff it is finitary. The category of algebras for $F$ is then equivalent to a variety of $\Sigma$-algebras.
\end{prop}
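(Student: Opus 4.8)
The plan is to prove the two implications of the equivalence and then, separately, the claim about varieties.

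\textbf{The easy implication.} First I would note that each summand $X \mapsto \Sigma_n \times X^n$ of $H_\Sigma$ preserves filtered colimits, hence so does the coproduct $H_\Sigma = \coprod_{n\in\N} \Sigma_n \times (-)^n$; and any set functor admitting a componentwise-surjective natural transformation from a finitary functor is itself finitary, by a routine chase using that filtered colimits in $\Set$ are formed pointwise (cf.\ \cite{AT}). Thus every functor with a presentation is finitary.

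\textbf{The substantial implication.} Assume $F$ is finitary. I would build the canonical presentation with $\Sigma_n := Fn$ for the finite ordinal $n = \{0,\dots,n-1\}$, so that $H_\Sigma X = \coprod_{n\in\N} Fn \times X^n$, and define $\varepsilon_X$ on the $n$-th summand by $(\sigma,(x_0,\dots,x_{n-1})) \mapsto Ff(\sigma)$, where $f\colon n \to X$ sends $i$ to $x_i$; naturality in $X$ is immediate from functoriality of $F$. Surjectivity is the only real content: since $F$ is finitary and every set is the filtered union of its finite subsets, each $y \in FX$ lies in the image of $Fm$ for some finite subset inclusion $m\colon X' \hookrightarrow X$; fixing a bijection $g\colon n \xrightarrow{\ \sim\ } X'$ (so that $Fg$ is a bijection) and $y' \in FX'$ with $Fm(y') = y$, the element $\sigma := (Fg)^{-1}(y') \in Fn$ together with the tuple induced by $m\cdot g\colon n \to X$ satisfies $\varepsilon_X(\sigma, m\cdot g) = F(m\cdot g)(\sigma) = y$. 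Hence $\varepsilon$ is a presentation.

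\textbf{Algebras as a variety.} The presentation $\varepsilon$ induces a functor from $F$-algebras to $\Sigma$-algebras ($= H_\Sigma$-algebras), sending $(A,a\colon FA\to A)$ to $(A, a\cdot\varepsilon_A)$ and acting identically on underlying sets and on morphisms; since $\varepsilon_A$ is epic, a map of underlying sets intertwines the structures $a\cdot\varepsilon_A$ iff it intertwines the structures $a$, so this functor is full, faithful and injective on objects. Its replete image is the class of $\Sigma$-algebras $(A,b)$ for which $b$ factors through $\varepsilon_A$, equivalently $b$ coequalizes the kernel pair of $\varepsilon_A$. Now the kernel pair $K \rightrightarrows H_\Sigma$ of $\varepsilon$, computed pointwise, is finitary because finite limits commute with filtered colimits in $\Set$; applying the construction of the previous paragraph to $K$ gives a presentation $H_\Gamma \twoheadrightarrow K$, and composing with the two kernel-pair projections yields $l,r\colon H_\Gamma \rightrightarrows H_\Sigma$ with $F$ the pointwise coequalizer of $l$ and $r$ (in $\Set$ a surjection is the coequalizer of its kernel pair, and precomposing with the epi $H_\Gamma \twoheadrightarrow K$ does not change the coequalizer). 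Since a natural transformation $H_\Gamma \to H_\Sigma$ amounts, by Yoneda, to a choice of an element of $H_\Sigma m$ --- a flat $\Sigma$-term in the variables $0,\dots,m-1$ --- for each $m$ and each $\gamma\in\Gamma_m$, the pair $(l,r)$ is literally a $\Gamma$-indexed set of $\Sigma$-equations $E$, and $(A,b) \models E$ iff $b\cdot l_A = b\cdot r_A$ iff $b$ coequalizes $l_A,r_A$ iff $(A,b)$ lies in the replete image above. Thus the category of $F$-algebras is concretely equivalent to the variety of $\Sigma$-algebras satisfying $E$.

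\textbf{Main obstacle.} The delicate point is precisely this last step: converting the abstract condition ``$b$ factors through $\varepsilon_A$'' into an honest equational axiomatization, which forces one to re-present the kernel-pair functor of $\varepsilon$ as a quotient of a polynomial functor and to check that the resulting class of $\Sigma$-algebras is genuinely a variety (closed under products, subalgebras and homomorphic images, equivalently cut out by the equations $E$). The remaining verifications --- naturality and surjectivity of $\varepsilon$, and fullness and faithfulness of the comparison functor --- are straightforward.
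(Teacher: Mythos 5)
Your proof is correct and follows the same canonical construction the paper sketches in Remark~\ref{R:proof}: take $\Sigma_n = Fn$, obtain $\varepsilon$ from the Yoneda Lemma, and deduce epicness of the components from finitarity (and conversely). Your kernel-pair argument for the variety claim is a more careful elaboration of what the paper asserts without proof, and the equations you extract from $l,r\colon H_\Gamma \rightrightarrows H_\Sigma$ are exactly the paper's $\varepsilon$-equations, so the two treatments coincide.
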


\begin{remark}\label{R:proof} 
The proof is not difficult: a possible signature for $F$ is $\Sigma_n = Fn$ for $n\in \N$. Yoneda Lemma yields a natural transformation from $\Sigma_n \times \Set(n, -)$ to $F$ for every $n\in \N$, and this defines  $\varepsilon \colon H_\Sigma \to F$ which is epic iff $F$ is finitary.
\end{remark}

Moreover, if elements of $H_\Sigma X =\coprod\limits_{n\in \N} \Sigma_n \times X^n$ are represented as flat terms $\sigma (x_1, \dots , x_n)$, then we define \textit{$\varepsilon$-equations} as equations of the following form:
 $$
 \sigma (x_1, \dots , x_n) = \tau(y_1, \dots, y_m)
  $$
  such that $\sigma \in \Sigma_n$, $\tau \in \Sigma_m$, and $\varepsilon_X$ merges the given  elements  of $H_\Sigma X$. (Here $X= \{x_1, \dots , x_n, y_1, \dots y_m\}$.) The variety of $\Sigma$-algebras presented by all $\varepsilon$-equations is equivalent to the category of $F$-algebras. This equivalence takes an algebra $\alpha\colon FA \to A$ to the $\Sigma$-algebra $\alpha \cdot \varepsilon_A \colon H_\Sigma A \to A$.

\begin{corollary}\label{C:pres}
The initial algebra $\mu F$ is  the quotient of the algebra $\mu H_\Sigma$ of finite $\Sigma$-trees modulo the congruence $\sim$ merging trees $s$ and $s'$ iff $s$ can be obtained  from $s'$ by a (finite) application of $\varepsilon$-equations.
\end{corollary}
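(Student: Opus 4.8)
The plan is to use the equivalence, recalled in Proposition~\ref{P:pres} and the discussion following Remark~\ref{R:proof}, between the category of $F$-algebras and the variety $\mathcal{V}$ of $\Sigma$-algebras presented by the $\varepsilon$-equations. Under this equivalence an $F$-algebra $\alpha\colon FA\to A$ corresponds to the $\Sigma$-algebra $\alpha\cdot\varepsilon_A$, and conversely every $\Sigma$-algebra in $\mathcal V$ arises this way (uniquely, since $\varepsilon_A$ is epic). Equivalences of categories preserve initial objects, so the initial $F$-algebra $\mu F$ corresponds to the initial object of $\mathcal V$. Hence it suffices to identify the initial algebra of $\mathcal V$ with the stated quotient of $\mu H_\Sigma$.

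The key steps are then as follows. First, recall that $\mu H_\Sigma=\Phi_\Sigma\emptyset$ is the algebra of finite $\Sigma$-trees (Remark~\ref{R:psi}(1) with $X=\emptyset$), and that it is the \emph{absolutely free} $\Sigma$-algebra on no generators, i.e.\ the initial object of the category of all $\Sigma$-algebras. Second, invoke the standard fact from universal algebra that the initial algebra of a variety presented by a set $E$ of equations is the quotient of the absolutely free algebra modulo the fully invariant congruence generated by $E$; concretely, two terms are identified in the initial algebra iff one can be rewritten into the other by finitely many applications of the equations in $E$ (substituted and applied in context). Applying this with $E$ the set of $\varepsilon$-equations gives exactly the congruence $\sim$ of the statement: $s\sim s'$ iff $s$ is obtained from $s'$ by a finite application of $\varepsilon$-equations. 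Third, combine the two identifications: $\mu F\cong$ (initial object of $\mathcal V$) $\cong \mu H_\Sigma/{\sim}$, and note that the $F$-algebra structure $\varphi$ on $\mu H_\Sigma/{\sim}$ is the one induced by the $H_\Sigma$-algebra structure $\varphi'$ of $\mu H_\Sigma$ via $\varepsilon$, precisely because the equivalence sends $\alpha$ to $\alpha\cdot\varepsilon_A$.

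I do not expect a serious obstacle here; the result is essentially a translation of Birkhoff's description of free algebras in a variety through the equivalence of Proposition~\ref{P:pres}. The only point that needs a little care is making sure that the abstract "fully invariant congruence generated by the $\varepsilon$-equations" coincides with the elementary "rewrite $s$ to $s'$ by a finite application of $\varepsilon$-equations" description: the $\varepsilon$-equations already involve all variables among $x_1,\dots,x_n,y_1,\dots,y_m$ explicitly, and because $\varepsilon$ is a natural transformation the family of $\varepsilon$-equations is automatically closed under substitution of $\Sigma$-trees for variables, so the congruence they generate is already fully invariant and is generated just by contextual application. This is what licenses the phrasing in the statement, and it is a routine verification.
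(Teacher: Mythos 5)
Your proposal is correct and follows essentially the same route as the paper, which states the corollary as an immediate consequence of Proposition~\ref{P:pres} and the remark that the variety presented by the $\varepsilon$-equations is equivalent to the category of $F$-algebras (via $\alpha\mapsto\alpha\cdot\varepsilon_A$). The paper leaves the Birkhoff-style identification of the initial object of that variety with the quotient of the term algebra $\mu H_\Sigma$ by the congruence generated by the $\varepsilon$-equations implicit; you have simply spelled it out, including the (correct) observation that naturality of $\varepsilon$ makes the substitution-closure of the generated congruence unproblematic.
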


\begin{example}\label{E:pres}
The finite power-set functor $\cp_f$ has  a presentation by the signature $\Sigma$ with a unique $n$-ary operation for every $n\in \N$. Thus, $H_\Sigma X = X^\ast$. And  we consider the natural transformation $\varepsilon_X \colon X^\ast \to \cp_f X$  given by $(x_1\dots x_n)\mapsto \{x_1, \dots , x_n\}$.

$\mu H_\Sigma$ can be described as  the algebra of all (unlabelled) finite trees. And two trees are congruent iff  they have the same extensional quotient, see Example~\ref{E:chain}.
Consequently, $\mu \cp_f = \mu H_\Sigma \big/ \sim$ is the set of all finite unordered extensional trees.
\end{example}

\begin{remark}\label{R:pres}
Analogously to $\mu F= \mu H_\Sigma \big/\sim$ above, we can describe the terminal coalgebra $\nu F$ as a quotient of $\nu H_\Sigma$, whenever a nullary symbol $p\in \Sigma_0$ is chosen, as follows. In \cite[3.13]{AM}, the congruence $\sim^\ast$ on $\nu H_\Sigma$  of a possibly infinite application of $\varepsilon$-equations was defined as follows:
$$
s\sim^\ast  s'\quad \mbox{iff}\quad 
\partial_n s \sim \partial _n s' \ (n\in \N)\,.
$$
\end{remark}

\begin{theorem}[{\cite[3.15]{AM}}]\label{T:term} 
 The quotient coalgebra $\nu H_\Sigma\big/\sim^\ast$ is, when considered as an $F$-coalgebra, the terminal coalgebra. Shortly,
 $$
 \nu F= \nu H_\Sigma \Big/\sim^\ast\,.
 $$
\end{theorem}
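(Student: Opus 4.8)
The plan is to exhibit the quotient $\nu H_\Sigma\big/\sim^\ast$ as a terminal $F$-coalgebra by verifying the universal property directly, using the presentation $\varepsilon\colon H_\Sigma\twoheadrightarrow F$ and the description $\nu F=\nu H_\Sigma\big/\sim^\ast$ as the place where all finitary approximants agree modulo $\varepsilon$-equations. First I would recall from Proposition~\ref{P:pres} that an $F$-coalgebra $\alpha\colon A\to FA$ is the same thing as an $H_\Sigma$-coalgebra $\varepsilon_A^{-1}$-free of the $\varepsilon$-equations in an appropriate sense; dually to the algebra case, the correct statement is that $F$-coalgebras are exactly those $H_\Sigma$-coalgebras $\beta\colon A\to H_\Sigma A$ which \emph{respect} the $\varepsilon$-equations, meaning $\beta$ factors (necessarily uniquely, since $\varepsilon_A$ is epic only on the nose if we are careful — here one uses that $\varepsilon$ is a split epi pointwise on sets, or simply picks representatives) through a map $A\to FA$ after composing with $\varepsilon_A$. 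I would set up this correspondence cleanly at the start, since it converts the whole problem into a statement purely about $H_\Sigma$-coalgebras and the congruence $\sim^\ast$.

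Next I would carry out the two halves of the universal property. For \emph{existence}: given an $F$-coalgebra $\alpha\colon A\to FA$, view it via $\alpha'=\varepsilon_A\cdot\text{(a section)}$ — more precisely, pick for each $a\in A$ a flat term $\sigma(a_1,\dots,a_n)$ with $\varepsilon_A(\sigma(a_1,\dots,a_n))=\alpha(a)$, giving an $H_\Sigma$-coalgebra structure; by terminality of $\nu H_\Sigma$ there is a unique $H_\Sigma$-coalgebra homomorphism $h\colon A\to\nu H_\Sigma$, and I would compose with the quotient map $q\colon\nu H_\Sigma\twoheadrightarrow\nu H_\Sigma\big/\sim^\ast$ and check that $q\cdot h$ is an $F$-coalgebra homomorphism into $\nu F$; this check uses that $q$ identifies $s\sim^\ast s'$, hence intertwines the two $H_\Sigma$-structures modulo $\varepsilon$-equations, which is exactly what is needed to pass to $F$-structures. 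For \emph{uniqueness}: given two $F$-coalgebra homomorphisms $g_1,g_2\colon A\to\nu F$, I would lift them along $q$ to $H_\Sigma$-coalgebra maps (this needs an argument — one lifts to the terminal $H_\Sigma$-coalgebra $\nu H_\Sigma$ using a chosen section of $q$, or better, argues at the level of finitary approximants $\partial_n$) and then invoke Worrell's finitary-approximant description: two elements of $\nu F$ are equal iff all their $\partial_n$-cuttings agree modulo $\sim$, and an $F$-coalgebra homomorphism is determined by the induced maps into each $F^n1$, which in turn are determined by $\alpha$ and are independent of the chosen representatives because $\varepsilon$-equations are precisely the ambiguity.

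The cleanest route to both halves is probably to work through the terminal-coalgebra chain rather than choosing sections. Using Lemma~\ref{L:hat}, the presentation $\varepsilon\colon H_\Sigma\to F$ induces $\hat\varepsilon_n\colon H_\Sigma^n1\to F^n1$, and since both functors are finitary, $\hat\varepsilon_\omega\colon\nu H_\Sigma=H_\Sigma^\omega1\to F^\omega1$ together with the further Worrell steps $F^{\omega+\omega}1=\nu F$ (Remark~\ref{R:W}) gives a canonical surjection $\nu H_\Sigma\twoheadrightarrow\nu F$. I would then prove that the kernel of this surjection is exactly $\sim^\ast$: the inclusion $\ker\subseteq\,\sim^\ast$ follows because $\partial_n$ on $\nu H_\Sigma$ and on $\nu F$ is computed levelwise through $\hat\varepsilon_n$, and $\hat\varepsilon_n$ is, level by level, the $n$-fold application of the congruence generated by $\varepsilon$-equations (this is essentially Corollary~\ref{C:pres} read in the coalgebra chain); the reverse inclusion is the statement that if all $\partial_n s\sim\partial_n s'$ then $s$ and $s'$ have the same image, which is immediate from the limit description $\nu F\hookrightarrow\lim_n F^n1$. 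Granting the kernel computation, $\nu H_\Sigma\big/\sim^\ast\cong\nu F$ as $F$-coalgebras, since the surjection is visibly an $F$-coalgebra homomorphism (it comes from a natural transformation of chains and commutes with the structure maps $t_{\omega+\omega+1,\omega+\omega}$), and a bijective coalgebra homomorphism is an isomorphism.

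The main obstacle I expect is the kernel computation, i.e., proving $\ker(\hat\varepsilon\text{-induced surjection})\,=\,\sim^\ast$ — specifically the direction that $\hat\varepsilon_n$ identifies \emph{exactly} the pairs of height-$\le n$ $\Sigma$-trees related by a finite application of $\varepsilon$-equations, no more and no less. The ``no more'' part (that $\hat\varepsilon_n$ doesn't collapse anything beyond $\sim$) requires knowing that the variety presented by the $\varepsilon$-equations really is the category of $F$-algebras \emph{and} that this transfers to the coalgebra chain, which is where one leans on Proposition~\ref{P:pres} and the already-cited \cite[3.13]{AM}; in fact, since Theorem~\ref{T:term} is quoted from \cite[3.15]{AM}, the honest proof here is largely a matter of assembling Corollary~\ref{C:pres}, Remark~\ref{R:pres}, Lemma~\ref{L:hat}, and Worrell's Remark~\ref{R:W} rather than proving something genuinely new — so the ``proof'' is really a verification that these pieces fit, with the coalgebra-homomorphism bookkeeping (naturality of $\hat\varepsilon$ with respect to the connecting maps, and compatibility with the coalgebra structures that are inverses of $t_{\omega+\omega+1,\omega+\omega}$) being the part demanding care.
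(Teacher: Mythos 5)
The paper does not actually prove this statement --- it is imported verbatim from \cite[3.15]{AM} --- so there is no internal proof to compare against; I will assess your sketch against the machinery the paper develops around the citation (Remark~\ref{R:q}, Lemma~\ref{L:split}, Theorem~\ref{T:good}). Your outline names the right ingredients, but two of its load-bearing steps are genuine gaps. First, the opening ``correspondence'' is not available: a presentation $\varepsilon\colon H_\Sigma\twoheadrightarrow F$ makes $F$-\emph{algebras} a variety of $\Sigma$-algebras (Proposition~\ref{P:pres}), but this does not dualize --- $\varepsilon_A$ points the wrong way for coalgebras, so an $F$-coalgebra only lifts to an $H_\Sigma$-coalgebra after a non-canonical, non-functorial choice of sections, and ``$H_\Sigma$-coalgebras respecting the $\varepsilon$-equations'' is not a category equivalent to $F$-coalgebras. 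This matters most in the uniqueness half of terminality, which you essentially skip: two $F$-coalgebra homomorphisms $g_1,g_2\colon A\to\nu H_\Sigma/\sim^\ast$ cannot be ``lifted along $q$'' (a section of $q$ is not a coalgebra homomorphism), and the fallback to finitary approximants presupposes that $\nu H_\Sigma/\sim^\ast$ embeds into $\lim_n F^n1$ compatibly with its coalgebra structure --- which is exactly the identity $\hat\varepsilon_\omega=t_{\omega+\omega,\omega}\cdot\hat k$ that the paper only establishes later, in step (2) of the proof of Theorem~\ref{T:good}. Before any of this one must also check that $\sim^\ast$ is a coalgebra congruence at all, i.e.\ that $F\hat k\cdot\varepsilon_{\nu H_\Sigma}\cdot\tau'$ factors through $\hat k$ (Remark~\ref{R:q}); you use this square but never verify it.

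Second, in your preferred chain route the word ``surjection'' in ``canonical surjection $\nu H_\Sigma\twoheadrightarrow\nu F$'' hides the hardest point. Each $\hat\varepsilon_n\colon H_\Sigma^n1\to F^n1$ is epic, but a limit of an $\omega^{\op}$-chain of surjections in $\Set$ need not be surjective, so surjectivity of $\hat\varepsilon_\omega$ (let alone of its corestriction to $F^{\omega+\omega}1$) does not come for free from the levelwise picture. The argument that actually works is the one in Lemma~\ref{L:split}: use terminality of $\nu H_\Sigma$ against the $H_\Sigma$-coalgebra $b\cdot\tau$ on $\nu F$ to produce $k^\ast$ with $\hat k\cdot k^\ast=\id$. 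Likewise, the ``no more'' half of your kernel computation --- that $\hat\varepsilon_n$ identifies exactly the pairs of height-$\le n$ trees related by $\sim$ --- is the substantive content of \cite[3.13]{AM} and not a formal consequence of Corollary~\ref{C:pres}. Since these are precisely the points where the cited proof does its work, your sketch is an accurate map of the terrain but not yet a proof.
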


 \begin{remark}\label{R:q}

Let $\tau'\colon \nu H_\Sigma \to H_\Sigma(\nu H_\Sigma)$ and $\tau \colon \nu F \to F(\nu H)$ denote the respective coalgebra structures.
 The quotient map $\hat k \colon \nu H_\Sigma \to \nu F$ is a homomorphism of coalgebras for $F$, i.e., 
the following square
$$
\xymatrix{
\nu H_\Sigma \ar[d]_{\hat k} \ar[r]^{\tau'} & H_\Sigma (\nu H_\Sigma)
\ar[r]^{\varepsilon_{\nu H_\Sigma}}& F(\nu H_\Sigma)\ar[d]^{F \hat k}\\
\nu F \ar[rr]_{\tau} && F (\nu F)
}
$$
commutes.
 This was proved in \cite{AM}, see the proof of Theorem 3.15 there (where $\hat k$ was denoted by $\hat \varepsilon$).
 \end{remark}

 \begin{lemma}\label{L:split}
 The morphism $\hat k\colon \nu H_\Sigma \to \nu F$ is a split epimorphism.
 \end{lemma}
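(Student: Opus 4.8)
The plan is to construct a section of $\hat k$ directly, i.e.\ a map $\ell\colon \nu F\to \nu H_\Sigma$ with $\hat k\cdot\ell=\id$. The natural idea is to exploit the $\varepsilon$-equations: the quotient $\hat k$ collapses a $\Sigma$-tree to the $F$-behaviour it represents, so a section should pick, in a coherent way, a canonical $\Sigma$-tree representing each $F$-tree. Concretely, I would first produce a section of $\varepsilon$ on the functor level. Since each component $\varepsilon_X\colon H_\Sigma X\twoheadrightarrow FX$ is a surjection of sets, the axiom of choice gives a map $s_X\colon FX\to H_\Sigma X$ with $\varepsilon_X\cdot s_X=\id_{FX}$; but such $s_X$ need not be natural. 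The key observation is that we do not need naturality of the section — we only need a single coalgebra-to-coalgebra map into $\nu H_\Sigma$, and $\nu H_\Sigma$ is terminal among $H_\Sigma$-coalgebras.

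So the main step is: turn $\nu F$ into an $H_\Sigma$-coalgebra in such a way that $\hat k$ becomes an $H_\Sigma$-coalgebra homomorphism. Using the chosen set-section $s=s_{\nu F}\colon F(\nu F)\to H_\Sigma(\nu F)$, define the $H_\Sigma$-coalgebra structure on $\nu F$ as the composite
$$
\beta\ \equiv\ \nu F\xrightarrow{\ \tau\ }F(\nu F)\xrightarrow{\ s\ }H_\Sigma(\nu F)\,.
$$
Then let $\ell\colon (\nu F,\beta)\to(\nu H_\Sigma,\tau')$ be the unique $H_\Sigma$-coalgebra homomorphism guaranteed by terminality of $\nu H_\Sigma$. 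I claim $\hat k\cdot\ell=\id_{\nu F}$. To see this, note by Remark~\ref{R:q} that $\hat k\colon(\nu H_\Sigma,\varepsilon_{\nu H_\Sigma}\cdot\tau')\to(\nu F,\tau)$ is an $F$-coalgebra homomorphism. Composing with the $H_\Sigma$-homomorphism $\ell$ and using $\varepsilon_{\nu F}\cdot s=\id$, one checks that $\hat k\cdot\ell\colon(\nu F,\tau)\to(\nu F,\tau)$ is an $F$-coalgebra endomorphism: indeed
$$
\tau\cdot(\hat k\cdot\ell)
= F\hat k\cdot\varepsilon_{\nu H_\Sigma}\cdot\tau'\cdot\ell
= F\hat k\cdot\varepsilon_{\nu H_\Sigma}\cdot H_\Sigma\ell\cdot\beta
= F\hat k\cdot F\ell\cdot\varepsilon_{\nu F}\cdot s\cdot\tau
= F(\hat k\cdot\ell)\cdot\tau\,,
$$
where the middle equalities use that $\ell$ is an $H_\Sigma$-homomorphism and that $\varepsilon$ is natural. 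By terminality of $\nu F$ (as an $F$-coalgebra) the only such endomorphism is $\id_{\nu F}$, so $\hat k\cdot\ell=\id_{\nu F}$, proving $\hat k$ is a split epimorphism.

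The step I expect to be the main obstacle is verifying cleanly that $\hat k\cdot\ell$ is an $F$-coalgebra morphism $\nu F\to\nu F$ — the computation above hinges on threading together three facts (Remark~\ref{R:q}'s square, the homomorphism property of $\ell$, and naturality of $\varepsilon$) in exactly the right order, and the bookkeeping of which structure map ($\tau$, $\tau'$, $\beta$, $\varepsilon$) sits where is where an error would creep in; I would write that diagram out carefully. A secondary point worth stating is that $\ell$ need not be compatible with the orders by cutting and need not be a coalgebra homomorphism for $F$, only for $H_\Sigma$; this is harmless here since all we claim is that $\hat k$ splits as a map of sets (equivalently, in $\Set$), which is what subsequent arguments about the conservative completion will use. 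If one prefers to avoid choice, one can instead invoke Proposition~\ref{P:pres}: a presentation can be chosen with $\Sigma_n=Fn$, and then a canonical (though still not natural) section exists by picking representatives, but AC-based existence of $s_{\nu F}$ is the shortest route.
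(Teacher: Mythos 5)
Your proposal is correct and is essentially the paper's own proof: the paper likewise chooses a (not necessarily natural) splitting $b$ of $\varepsilon_{\nu F}$, equips $\nu F$ with the $H_\Sigma$-coalgebra structure $b\cdot\tau$, takes the unique homomorphism $k^\ast$ into $\nu H_\Sigma$, and shows $\hat k\cdot k^\ast$ is an $F$-coalgebra endomorphism of the terminal coalgebra via the same three-step computation (Remark~\ref{R:q}, the homomorphism property, naturality of $\varepsilon$). Your $\ell$ and $s$ are the paper's $k^\ast$ and $b$; the verification is identical.
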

 
 \begin{proof}
 Choose $ b\colon F(\nu F) \to H_\Sigma (\nu F)$ with $\varepsilon_{\nu F} \cdot b =\id$. For  the coalgebra $b\cdot \tau \colon \nu F \to H_\Sigma (\nu F)$ we have a unique homomorphism $k^\ast \colon \nu F \to \nu H_\Sigma$ with $\tau' \cdot k^\ast = H_\Sigma k^\ast \cdot (b\cdot \tau)$. We prove $\hat k\cdot k^\ast =\id$ by verifying that $\hat k \cdot k^\ast$ is an endomorphism of the terminal coalgebra $\nu F$, i.e., $\tau \cdot (\hat k \cdot k^\ast)=F(\hat k\cdot k^\ast)\cdot \tau$:
 \begin{align*}
 \tau\cdot \hat k\cdot k^\ast &= F \hat k\cdot \varepsilon_{\nu H_\Sigma} \cdot \tau' \cdot k^\ast&& \mbox{($\hat k$ a homomorphism)}\\
 &= F k\cdot \varepsilon_{\nu H_\Sigma} \cdot H_\Sigma k^\ast \cdot b\cdot \tau && \mbox{($k^\ast$ a homomorphism)}\\
&= F(k\cdot k^\ast)\cdot \varepsilon_{\nu F} \cdot b\cdot \tau && \mbox{($\varepsilon$ natural)}\\
&= F(k\cdot k^\ast)\cdot \tau && \mbox{($\varepsilon_{\nu F} \cdot b=\id$)}
\end{align*}
\end{proof}
 
\begin{definition}\label{D:term} %
The following relation $\leq$ on $\nu F$ is called \textit{order by cutting}: given distinct  congruence classes $[s] $ and $ [s']$ of $\sim^\ast$, put 
$$
[s] < [s'] \quad \mbox{iff} \quad s\sim \partial_n s'\quad \mbox{for some}\quad n\in \N\,.
$$
\end{definition}

We obtain posets $\nu F$ and $\mu F$ (as a subposet via $\bar u$ see Remark~\ref{R:mono}).

\begin{example}\label{E:term} %
For the presentation of $\cp_f$ of Example~\ref{E:pres} we know that $\nu H_\Sigma$ is the algebra of all  finitely branching trees. We have $s\sim^\ast s'$ iff the extensional quotients of $\partial_n s$ and $\partial_ns'$ coincide for all $n\in \N$. This way Barr described $\nu \cp_f$ in \cite{B}.

Consequently, 
for extensional trees we have 
$s < s'$ iff $s$ is the extensional quotient of some cutting of $s'$.
\end{example}

\begin{notation}\label{N:term} 
In the rest of the present section we assume that $F$ is a finitary set functor with $F\emptyset \ne \emptyset$, and that a presentation $\varepsilon$ is given. Since $\varepsilon_{\emptyset}\colon \Sigma_0 \to F\emptyset$ is epic, we can choose a nullary symbol $p'$ in $\Sigma_0$. This yields a choice of $p=\varepsilon_\emptyset (p')$ in $F\emptyset$.

We use the notation $\tau$, $\varphi$,  $r_n$ etc. for $F$ as in  Section~3, and the corresponding notation $\tau'$, $\varphi'$, $r'_n$ etc.\ for $H_\Sigma$. Recall $\hat \varepsilon_\omega \colon \nu H_\Sigma \to F^\omega 1$ from Lemma~\ref{L:hat}.

\end{notation}

\begin{remark}\label{R:least}
(1) The homomorphism $\hat k\colon \nu H_\Sigma \to \nu F$ of Remark~\ref{R:q} is clearly monotone and  preserves the least elements. Indeed, if $p'\in \Sigma_0$ is the chosen element, then the least element of $\nu H_\Sigma$ is the singleton tree labelled by $p'$. And the least element of $\nu F$ is $[p']=\hat k(p')$.

(2) Since $\tilde \varepsilon_\omega$ is a domain-codomain restriction of $\hat k$, see Lemma~\ref{L:basic},
 it also is  monotone and  preserves the least element.
\end{remark}

\begin{prop}\label{P:ideal}
The morphisms $r_n \colon F^\omega 1\to F^\omega 1$ and $r'_n \colon H_\Sigma^\omega 1 \to H_\Sigma^\omega 1$ are related by $r_n \cdot \hat \varepsilon_\omega = \hat \varepsilon_\omega \cdot r'_n$ ($n\in \N$).
\end{prop}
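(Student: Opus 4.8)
The plan is to unfold both sides along the definitions in Notation~\ref{N:chain}(5), use the compatibility relations of Remark~\ref{R:simple} to replace the ``$\omega$-level'' morphisms by the ``finite-level'' ones, and reduce the proposition to two compatibility statements, each of which is then settled by an induction along the chains.

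\emph{Reduction.} Write $r_n=e_n\cdot t_n$ and $r'_n=e'_n\cdot t'_n$, with $e_n=\bar u\cdot i_{n+1}\cdot F^np$ and $e'_n=\bar u'\cdot i'_{n+1}\cdot H_\Sigma^np'$. Since $t_n\cdot\hat\varepsilon_\omega=\hat\varepsilon_n\cdot t'_n$ by Remark~\ref{R:simple}, we get $r_n\cdot\hat\varepsilon_\omega=e_n\cdot\hat\varepsilon_n\cdot t'_n$ and $\hat\varepsilon_\omega\cdot r'_n=\hat\varepsilon_\omega\cdot e'_n\cdot t'_n$, so it suffices to show that the two shorter composites coincide: $e_n\cdot\hat\varepsilon_n=\hat\varepsilon_\omega\cdot e'_n\colon H_\Sigma^n1\to F^\omega1$. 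Using once more Remark~\ref{R:simple} in the form $i_{n+1}\cdot\tilde\varepsilon_{n+1}=\tilde\varepsilon_\omega\cdot i'_{n+1}$, this in turn follows from:
\begin{enumerate}
\item[(a)] $\hat\varepsilon_\omega\cdot\bar u'=\bar u\cdot\tilde\varepsilon_\omega\colon H_\Sigma^\omega0\to F^\omega1$\quad(the two canonical embeddings of Notation~\ref{N:chain}(4) are compatible with the presentation);
\item[(b)] $F^np\cdot\hat\varepsilon_n=\tilde\varepsilon_{n+1}\cdot H_\Sigma^np'$ for every $n\in\N$\quad(the chosen base points $p$ and $p'$ match along the chains).
\end{enumerate}
Indeed, granting (a) and (b),
\begin{align*}
e_n\cdot\hat\varepsilon_n
&=\bar u\cdot i_{n+1}\cdot F^np\cdot\hat\varepsilon_n
=\bar u\cdot i_{n+1}\cdot\tilde\varepsilon_{n+1}\cdot H_\Sigma^np'\\
&=\bar u\cdot\tilde\varepsilon_\omega\cdot i'_{n+1}\cdot H_\Sigma^np'
=\hat\varepsilon_\omega\cdot\bar u'\cdot i'_{n+1}\cdot H_\Sigma^np'
=\hat\varepsilon_\omega\cdot e'_n ,
\end{align*}
where the second equality is (b), the third is Remark~\ref{R:simple}, and the fourth is (a).

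\emph{Proof of} (b). Induction on $n$, using the recursion clauses $\hat\varepsilon_{m+1}=F\hat\varepsilon_m\cdot\varepsilon_{H_\Sigma^m1}$ and $\tilde\varepsilon_{m+1}=F\tilde\varepsilon_m\cdot\varepsilon_{H_\Sigma^m0}$ of Lemma~\ref{L:hat}. The base case $n=0$ is exactly the identity $p=\varepsilon_\emptyset\cdot p'$ defining $p$ in Notation~\ref{N:term} (note $\hat\varepsilon_0=\id$ and $\tilde\varepsilon_1=\varepsilon_\emptyset$). For the step one rewrites $F^{n+1}p\cdot\hat\varepsilon_{n+1}=F(F^np\cdot\hat\varepsilon_n)\cdot\varepsilon_{H_\Sigma^n1}$, substitutes the inductive hypothesis, pushes $\varepsilon$ past $H_\Sigma^np'$ by naturality of $\varepsilon$, and recognizes the outcome as $\tilde\varepsilon_{n+2}\cdot H_\Sigma^{n+1}p'$.

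\emph{Proof of} (a). The key point is that any morphism $g\colon H_\Sigma^\omega0\to F^\omega1$ is determined by its ``diagonal'' family $(t_n\cdot g\cdot i'_n)_{n\in\N}$: for $m\le n$ one has $t_n\cdot g\cdot i'_m=(t_n\cdot g\cdot i'_n)\cdot i'_{m,n}$, and for $m\ge n$ one has $t_n\cdot g\cdot i'_m=t_{n,m}\cdot(t_m\cdot g\cdot i'_m)$, so every composite $t_n\cdot g\cdot i'_m$ is recovered, and these determine $g$ since $(i'_m)$ is a colimit cocone and $(t_n)$ a limit cone. Now, using Remark~\ref{R:simple} together with the defining equations $t_n\cdot\bar u\cdot i_n=F^nu$ and $t'_n\cdot\bar u'\cdot i'_n=H_\Sigma^nu$ of Notation~\ref{N:chain}(4), the two diagonals compute to
$$t_n\cdot(\bar u\cdot\tilde\varepsilon_\omega)\cdot i'_n=F^nu\cdot\tilde\varepsilon_n,\qquad t_n\cdot(\hat\varepsilon_\omega\cdot\bar u')\cdot i'_n=\hat\varepsilon_n\cdot H_\Sigma^nu,$$
where $u\colon0\to1$ is the unique map. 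These two families coincide by the auxiliary identity $\hat\varepsilon_n\cdot H_\Sigma^nu=F^nu\cdot\tilde\varepsilon_n$, which is a short induction on $n$ formally identical to the one in (b); hence (a) holds.

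\emph{Where the work is.} The whole argument is essentially bookkeeping with the naturality of $\hat\varepsilon_\bullet$ and $\tilde\varepsilon_\bullet$ (Lemma~\ref{L:hat}, Remark~\ref{R:simple}); the single load-bearing idea is the ``diagonal determines the map'' principle used in (a), which is precisely what spares us from having to compare the off-diagonal composites $t_n\cdot g\cdot i'_m$ ($m\ne n$) by hand. The only delicate points are keeping the index shifts straight in the two inductions and not losing the relation $p=\varepsilon_\emptyset(p')$.
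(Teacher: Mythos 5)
Your proof is correct and follows essentially the same route as the paper: the same decomposition into the compatibility of $\bar u$ with $\bar u'$ (your (a)), of $p$ with $p'$ (your (b)), and of $u$ along the chains (your auxiliary identity $\hat\varepsilon_n\cdot H_\Sigma^n u=F^nu\cdot\tilde\varepsilon_n$), combined at the end in the same way. The only difference is organizational: where the paper verifies $\bar u\cdot\tilde\varepsilon_\omega=\hat\varepsilon_\omega\cdot\bar u'$ by an explicit induction on $k$ over the composites $t_{n+k}\cdot(-)\cdot i'_n$, you package this as the (valid) observation that a morphism $H_\Sigma^\omega 0\to F^\omega 1$ is determined by its diagonal family $\bigl(t_n\cdot(-)\cdot i'_n\bigr)_{n\in\N}$.
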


\begin{proof}
(1) 
We prove $F^n u\cdot \tilde \varepsilon_n = \hat \varepsilon_n \cdot H_\Sigma^n u$ by induction on $n\in \N$. The first step is trivial. The induction step is computed as follows:
\begin{align*}
F^{n+1} u\cdot \tilde \varepsilon_{n+1} &= F(F^n u\cdot \tilde \varepsilon_n)\cdot \varepsilon_{H_\Sigma ^n0} &&\mbox{(definition of $\tilde\varepsilon_n$)}\\[4pt]
&= F(\hat \varepsilon_n \cdot H_{\Sigma}^n u) \cdot \varepsilon_{H_\Sigma^n0} && \mbox{(induction hypothesis)}\\[4pt]
&= F \hat \varepsilon_n \cdot \varepsilon_{H_\Sigma^n 1} \cdot H_\Sigma ^{n+1} u && \mbox{($\varepsilon$ natural)}\\[4pt]
&= \hat \varepsilon_{n+1} \cdot H_\Sigma^{n+1}u && \mbox{(definition of $\hat \varepsilon_n$)}\,.
\end{align*}

(2)
We next verify $\bar u \cdot \tilde \varepsilon_\omega= \hat \varepsilon_\omega \cdot \bar u'$.
 For that it is sufficient to prove, for all $n\in \N$, that $\bar u \cdot \tilde \varepsilon_\omega \cdot i'_n= \hat \varepsilon_\omega \cdot \bar u' \cdot i'_n$. Indeed, $(i'_n)$ is a collectively epic cocone. Thus, we only need to verify, by induction on $k \in \N$, that $t_{n+k}$ merges the two sides of that equation: $t_{n+k} \cdot (\bar u \cdot \tilde \varepsilon_\omega\cdot i'_n) = t_{n+k} \cdot (\hat \varepsilon_\omega \cdot \bar u' \cdot i'_n)$. (Here we use the fact that $(t_{n+k})_{k\in \N}$ is a collectively monic cone for every $n$.)
 
 This follows for $k=0$ from the following computation:
 \begin{align*}
t_n\cdot \bar u \cdot \tilde \varepsilon_\omega \cdot i'_n &= t_n \cdot \bar u \cdot i_n \cdot \tilde \varepsilon_n && \mbox{see Remark~\ref{R:simple}}
\\
& = F^n u\cdot \tilde \varepsilon_n && \mbox{(definition of $\bar u$)}\\
& = \hat \varepsilon_n \cdot H_\Sigma^n u && \mbox{see (1)}\\
&= \hat \varepsilon_n\cdot t'_n \cdot \bar u' \cdot i'_n && \mbox{(definition of $\bar u'$)}\\
&= t_n \cdot \hat\varepsilon_\omega \cdot \bar u'\cdot i'_n &&\mbox{see Remark~\ref{R:simple}}\,.
\end{align*}
And if the above equation holds for $k$, then we  can write $t_{n+(k+1)}$ as $t_{(n+1)+k}$ and apply the above equation  to $k$ and $n+1$. From that we obtain the induction step:
\begin{align*}
t_{n+(k+1)}\cdot \bar u \cdot \tilde \varepsilon_\omega \cdot i'_n &= t_{(n+1)+k} \cdot \bar u \cdot \tilde \varepsilon_\omega \cdot i'_{n+1} \cdot H_\Sigma^n i && \mbox{($i'_n$ compatible)}\\
&= t_{(n+1)+k} \cdot \hat \varepsilon_\omega \cdot \bar u' \cdot i'_{n+1} \cdot H_\Sigma^n i && \mbox{(induction hypothesis)}\\
&= t_{n+(k+1)} \cdot \hat \varepsilon_\omega \cdot \bar u'\cdot i'_n && \mbox{($i'_n$ compatible)}\,.
\end{align*}

(3) Now we prove   for the given point $p=\varepsilon_\emptyset \cdot p'\colon 1\to F0$ that $F^n p\cdot \hat \varepsilon_n=\tilde \varepsilon_{n+1}\cdot H_\Sigma^n p'$. This is trivial for $n=0$, and the induction step is as follows:
\begin{align*}
F^{n+1} p \cdot \hat \varepsilon_{n+1} &= F^{n+1} p\cdot F\hat \varepsilon_n\cdot \varepsilon_{H^n1}&& \mbox{(definition of $\hat \varepsilon_n$)}\\
&= F \big(\tilde\varepsilon_{n+1} \cdot H_\Sigma^n p'\big)\cdot \varepsilon_{H^n1} && \mbox{(induction hypothesis)}\\
&= F\tilde \varepsilon_{n+1} \cdot \varepsilon_{H_\Sigma^{n+1}1}\cdot H_\Sigma^{n+1} p' && \mbox{($\varepsilon$ natural)}\\
 &= \tilde \varepsilon_{n+2} \cdot H_\Sigma^{n+1} p' && \mbox{(definition of $\tilde\varepsilon_n$)}\,.
 \end{align*}
 
 (4) The proof of our proposition follows. Recall that $r_n$ is defined by
 $$
 r_n = e_n \cdot t_n = \bar u\cdot i_{n+1} \cdot F^n p\cdot t_n
 $$ 
and analogously $r_n'$. Thus
\begin{align*}
r_n\cdot \hat \varepsilon_\omega &= \bar u\cdot i_{n+1} \cdot F^n p\cdot t_n \cdot \hat \varepsilon_\omega &&\\
&= \bar u \cdot i_{n+1} \cdot F^np\cdot \hat \varepsilon_n\cdot t'_n &&
\mbox{see Remark~\ref{R:simple}}\\
&= \bar u\cdot i_{n+1}  \cdot \tilde \varepsilon_{n+1}\cdot H_\Sigma^n p'\cdot t_n' && \mbox{see (3)}\\
&= \bar u  \cdot \tilde\varepsilon_\omega \cdot i'_{n+1}  \cdot H_\Sigma^n p' \cdot t'_n && \mbox{see Remark~\ref{R:simple}}\\
&= \hat \varepsilon_\omega \cdot \bar u'\cdot i'_{n+1} \cdot H_\Sigma^n p' \cdot t'_n && \mbox{see (2)}\\
&= \hat \varepsilon_\omega \cdot r'_n\,.
\end{align*}
\end{proof}

In the following theorem $\mu F$ is considered as a subset of $\nu F$ via the monomorphism $m$, see Remark~\ref{R:mono}. Thus $(\mu F)^A$, ordered component-wise, is a subposet of $(\nu F)^A$.
Moreover, $F(\mu F)$ is considered as a poset via the bijection $\varphi$, and analogously for $F(\nu F)$.

\begin{theorem}\label{T:good}
Let $F$ be a finitary set functor with $F\emptyset \ne \emptyset$. 
 The order of $\nu F$ by cutting coincides with that of Theorem  \ref{T:chain}.
And 
 the poset $\nu F$ has the same conservative completion  as its subposet $\mu F$.
The coalgebra structure $\tau$ is the unique continuous extension of $\varphi^{-1}$.

\end{theorem}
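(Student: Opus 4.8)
The plan is to prove the three assertions in the order listed; the first is the crux, since it lets one compute inside the cpo $F^\omega 1$ of Theorem~\ref{T:chain}. By Worrell (Remark~\ref{R:W}), the canonical map $\ell\colon\nu F\to F^\omega 1$ (the unique cone morphism, equivalently the connecting map $t_{\omega+\omega,\omega}$) is a \emph{monomorphism} with image $\bigcap_n F^{\omega+n}1$. The key lemma is that
$$\hat\varepsilon_\omega=\ell\cdot\hat k\colon\nu H_\Sigma\longrightarrow F^\omega 1,$$
i.e.\ $\ell([s])=\hat\varepsilon_\omega(s)$; I would prove it by checking $t_n\cdot\hat\varepsilon_\omega=c_n\cdot\hat k$ for all $n\in\N$ by induction, where $c_n\colon\nu F\to F^n1$ is the canonical cone ($c_0$ unique, $c_{n+1}=Fc_n\cdot\tau$): the base case is the unique map into $1$, and for the step the relation $t'_{n+1}=H_\Sigma t'_n\cdot\tau'$, naturality of $\varepsilon$, and the fact (Remark~\ref{R:q}) that $\hat k$ is an $F$-coalgebra homomorphism give $\hat\varepsilon_{n+1}\cdot t'_{n+1}=F(\hat\varepsilon_n\cdot t'_n)\cdot\varepsilon_{\nu H_\Sigma}\cdot\tau'=F(c_n\cdot\hat k)\cdot\varepsilon_{\nu H_\Sigma}\cdot\tau'=Fc_n\cdot\tau\cdot\hat k=c_{n+1}\cdot\hat k$. (Equivalently $\hat k=\hat\varepsilon_{\omega+\omega}$ by Remark~\ref{R:q}, and the lemma follows from naturality of $\hat\varepsilon$, Lemma~\ref{L:hat}, since the $H_\Sigma$-chain is constant from $\omega$ on.) As $\ell$ is monic, $\hat\varepsilon_\omega$ has kernel exactly $\sim^\ast$ and image $\bigcap_n F^{\omega+n}1$, so $\nu F=\nu H_\Sigma/{\sim^\ast}$ is identified with the subposet $\bigcap_n F^{\omega+n}1\subseteq F^\omega 1$ via $[s]\mapsto\hat\varepsilon_\omega(s)$, under which $\mu F$ becomes $F^\omega 0$ (by Lemma~\ref{L:basic} and $\ell\cdot m=\bar u$, a similar induction). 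Finally I match the orders: for distinct classes, $[s]\le[s']$ in the order by cutting means $[s]$ is the $n$-th cutting of $[s']$ for some $n$ (Definition~\ref{D:term}), i.e.\ — using $\partial_n=r'_n$ for $H_\Sigma$ (Example~\ref{E:chain}(1)) and Proposition~\ref{P:ideal} — $\hat\varepsilon_\omega(s)=\hat\varepsilon_\omega(r'_ns')=r_n(\hat\varepsilon_\omega(s'))$ for some $n$, which is precisely $\hat\varepsilon_\omega(s)\sqsubset\hat\varepsilon_\omega(s')$ in the order of Theorem~\ref{T:chain}. Hence the two orders coincide and $\mu F\subseteq\nu F\subseteq F^\omega 1$ as subposets.

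Two elementary facts about $F^\omega 1$ drive the rest. (i) $r_n(x)\in\mu F$ for every $x$, since $r_n=\bar u\cdot\partial_n$ with $\partial_n$ taking values in $\mu F=F^\omega 0$ (Remark~\ref{R:new}). (ii) $x=\bigsqcup_n r_n(x)$ for every $x\in F^\omega 1$: the chain $\bigl(r_n(x)\bigr)_n$ is increasing (as $r_n\cdot r_m=r_{\min(n,m)}$) and, since $t_k\cdot r_m=t_k$ for $m\ge k$, any upper bound $y$ of it satisfies $t_k(y)=t_k(x)$ for all $k$, hence $y=x$, so by Theorem~\ref{T:chain} its join is $x$. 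Consequences: every element of $F^\omega 1$ is an $\omega$-directed join of elements of $\mu F$; if $x\sqsubseteq y$ with $y\in\nu F$ then $x=y\in\nu F$ or $x=r_n(y)\in\mu F\subseteq\nu F$, so $\nu F$ is downward closed in $F^\omega 1$; every strictly increasing $\omega$-chain in $F^\omega 1$ lies in $\mu F$; and $r_n(x)=x$ for all large $n$ when $x\in\mu F$.

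Now I claim $F^\omega 1$ is the conservative completion of both $\mu F$ and $\nu F$. It is a cpo (Theorem~\ref{T:chain}) containing $\mu F$ as a subposet closed under existing directed joins: the elements of $\mu F$ below a fixed $w\in\mu F$ are exactly its cuttings $\partial_0 w,\dots,\partial_h w$ (Definition~\ref{D:term}), a finite set, so a directed $D\subseteq\mu F$ possessing a join in $\mu F$ is finite, hence has a greatest element, which is also its join in $F^\omega 1$. Given a continuous $f\colon\mu F\to Q$ into a cpo $Q$, the formula $\bar f(x)=\bigsqcup_n f(r_n(x))$ defines — by exactly the argument of Remark~\ref{R:ideal}(3) with $(\mu F,F^\omega 1,r_n)$ in place of $(\Phi_\Sigma X,\Psi_\Sigma X,\partial_n)$ and using (i)--(ii) — a monotone continuous map extending $f$, and it is the unique continuous extension because $x=\bigsqcup_n r_n(x)$; thus $F^\omega 1=\overline{\mu F}$. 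The identical argument gives $F^\omega 1=\overline{\nu F}$: the inclusion $\nu F\hookrightarrow F^\omega 1$ preserves existing directed joins because $\nu F$ is downward closed (a strictly increasing $\omega$-chain with an upper bound in $\nu F$ lies in $\mu F$, and its $F^\omega 1$-join, being below that bound, stays in $\nu F$); $\mu F\subseteq\nu F$ is dense by (ii); and $\bar f(x)=\bigsqcup_n f(r_n(x))$ works verbatim, since $r_n(x)\in\mu F\subseteq\nu F$ and $x=\bigsqcup_n r_n(x)$ holds already in $\nu F$ (the $F^\omega 1$-join of the $r_n(x)$ being $x\in\nu F$). Hence $\overline{\mu F}=\overline{\nu F}$.

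For the coalgebra structure: by the stated convention $F(\nu F)$ carries the order transported along the bijection $\tau$, so $\tau$ is an order-isomorphism — in particular monotone and trivially continuous — and likewise $\varphi\colon F(\mu F)\to\mu F$ is an order-isomorphism with $Fm$ the subposet inclusion, since $\tau^{-1}\cdot Fm\cdot\varphi^{-1}=m$ (Remark~\ref{R:mono}). That same identity $\tau\cdot m=Fm\cdot\varphi^{-1}$ says exactly that $\tau$ restricts to $\varphi^{-1}$ on $\mu F$, so $\tau$ is a continuous extension of $\varphi^{-1}$; and it is the only one, for if $g\colon\nu F\to F(\nu F)$ is continuous with $g|_{\mu F}=\tau|_{\mu F}$, then for any $x\in\nu F$, writing $x=\bigsqcup_n r_n(x)$ with $r_n(x)\in\mu F$ and invoking continuity of $g$ and of $\tau$, $g(x)=\bigsqcup_n g(r_n(x))=\bigsqcup_n\tau(r_n(x))=\tau(x)$. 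The main obstacle is the first paragraph: proving $\hat\varepsilon_\omega=\ell\cdot\hat k$ and thereby reconciling the quotient description of $\nu F$ (Theorem~\ref{T:term}) with Worrell's $\bigcap_n F^{\omega+n}1\subseteq F^\omega 1$, so that the order by cutting becomes literally the restriction of the order of Theorem~\ref{T:chain}; everything after that is bookkeeping with (i) and (ii).
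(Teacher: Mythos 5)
Your proposal is correct and follows essentially the same route as the paper's proof: the key lemma $\hat\varepsilon_\omega = t_{\omega+\omega,\omega}\cdot\hat k$ established by the same induction, the order-matching via Proposition~\ref{P:ideal}, the identities $r_n(x)\in\mu F$ and $x=\bigsqcup_n r_n(x)$ driving the completion formula $\bar f(x)=\bigsqcup_n f(r_n(x))$, and the density-plus-continuity argument for uniqueness of the extension of $\varphi^{-1}$. The only notable divergences are cosmetic: you replace the paper's explicit compactness observation for $r_n(x)$ by direct cofinality reasoning, and you treat $Fm$ as a poset embedding up to identification where the paper invokes the replacement of $F$ by an inclusion-preserving $\bar F$.
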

\begin{proof}
(1) Recall that $t_n=t_{\omega,n}$ and $Ft_n= t_{\omega+1, n+1}$, thus
$$
t_{n+1}\cdot t_{\omega +\omega, \omega} = t_{\omega +\omega, n+1} = Ft_n \cdot t_{\omega +\omega, \omega+1}\,.
$$
Moreover, observe that since $\tau^{-1} = t_{\omega +\omega +1, \omega+\omega}$, we have $t_{\omega +\omega, \omega+1} \cdot \tau^{-1} = F t_{\omega +\omega, \omega}$.

\vskip 1mm
(2) We  prove that the homomorphism $\hat k\colon \nu H_\Sigma \to \nu F$ of Remark~\ref{R:q}
fulfils
$$
\hat\varepsilon_\omega = t_{\omega+\omega, \omega} \cdot \hat k \colon \nu H_\Sigma \to F^\omega 1\,.
$$
Following Remark~\ref{R:simple}  we need to prove the following equalities
$$
t_n\cdot (t_{\omega+\omega, \omega}\cdot \hat k) =\hat \varepsilon_n  \cdot t_n' \qquad (n\in \N)\,.
$$
The case $n=0$ is trivial. The induction step is as follows:
\begin{align*}
\hat \varepsilon_{n+1} \cdot t'_{n+1} &= F\hat \varepsilon_n \cdot \varepsilon_{H_\Sigma^n 1} \cdot t'_{n+1} && \mbox{(definition of $\hat \varepsilon_n$)}\\
&= F\hat \varepsilon_n \cdot \varepsilon_{H_\Sigma^n 1} \cdot H_\Sigma t'_n \cdot \tau' && \mbox{($\tau' = (t'_{\omega+1, \omega})^{-1}$}\\
& && \mbox{and $t'_n = t'_{\omega,n}$)}\\
&= F(\hat\varepsilon_n\cdot t'_n)\cdot \varepsilon_{\nu H_\Sigma} \cdot \tau' && \mbox{($\varepsilon$ natural)}\\
&= Ft_n \cdot Ft_{\omega + \omega, \omega} \cdot F\hat k \cdot \varepsilon_{\nu H_\Sigma} \cdot \tau' && \mbox{(induction hypothesis)}\\
&= Ft_n \cdot t_{\omega + \omega, \omega+1}\cdot \tau^{-1} \cdot F\hat k \cdot \varepsilon_{\nu H_\Sigma} \cdot \tau ' && \mbox{by (1)}\\
&=  Ft_n \cdot t_{\omega + \omega, \omega+1}\cdot \hat k && \mbox{($\hat k$ a homomorphism)}\\
&= t_{n+1} \cdot t_{\omega +\omega, \omega} \cdot \hat k && \mbox{by (1)}\,.
\end{align*}

(3) The congruence $\sim^\ast$ is the kernel equivalence of $\hat k$, see Remark~\ref{R:q}. Since $t_{\omega+\omega, \omega}$ is monic, it follows from (1) that this is also the kernel equivalence of $\hat\varepsilon_\omega$.

\vskip 1mm
(4) The ordering of $F^\omega 1$ defined in Theorem \ref{T:chain} coincides, when restricted to $\nu F$ (via the embedding $t_{\omega +\omega, \omega}$), with the ordering by cutting. To prove this, we verify that, given elements $x=[t]$ and $y=[s]$ of $\nu F$, the following equivalence holds for every $n\in \N$:
$$
t\sim^\ast \partial_n s \quad \mbox{iff} \quad t_{\omega +\omega, \omega} (x) = r_n\cdot t_{\omega +\omega, \omega} (y)\,.
$$
That is, we are to prove for all $n\in \N$ that
$$
\hat \varepsilon_\omega(t) = \hat \varepsilon_\omega \cdot r'_n(s) \quad \mbox{iff}\quad t_{\omega +\omega, \omega}(x) = r_n \cdot t_{\omega +\omega, \omega} (y)\,.
$$
Due to (2), this translates to the following equivalence
$$
\hat  \varepsilon_\omega (t) =\hat  \varepsilon_\omega \cdot r'_n(s) \quad \mbox{iff}\quad \hat  \varepsilon_\omega (t) = r_n \cdot \hat  \varepsilon_\omega (s)\,,
$$
which follows from Proposition~\ref{P:ideal}.

\vskip 1mm
(5) For the morphism $\bar u \colon \mu F \to F^\omega 1$ of  3.1(4) we prove that
$$
\bar u= t_{\omega +1, \omega}\cdot F\bar u \cdot \varphi^{-1}\,.
$$
It is sufficient to prove that the equality holds when precomposed by $i_{n+1}\colon F^{n+1} 0 \to \mu F$ for every $n\in \N$. Since $i_{n+1} = 
i_{n+1, \omega}$ and $\varphi^{-1} = i_{\omega, \omega+1}$, we have $\varphi^{-1} \cdot i_n = i_{n+1, \omega+1}= F i_n$. Thus we want to verify
$$
\bar u\cdot i_{n+1} = t_{\omega +1,\omega} \cdot F(\bar u \cdot i_n)\,.
$$
For that, we postcompose by $t_{n+1+k}$ for all $k\in \N$ (and use that given any $n$  this cone is collectively monic):
$$
t_{n+1+k}\cdot \bar u \cdot i_{n+1} = t_{\omega+1, n+1+k} \cdot F(\bar u\cdot i_n)\,.
$$
This equation holds for $k=0$ since the left-hand side is $F^{n+1}u$, see 3.1(4), and the right-hand one is
$$
t_{\omega +1, n+1} \cdot F(\bar u \cdot i_n) = Ft_n \cdot F\bar u \cdot F i_n = F(F^n u)\,.
$$
The induction step from $k$ to $k+1$ (for $n$ arbitrary) is  easy: just re-write $n+1+k+1$ as $n+2+k$ and use the induction hypothesis on $n+1$ in place of $n$.

\vskip 1mm

\vskip 1mm
(6) For every element $x \in F^\omega 1$, all elements $r_n(x)$ are compact. That is,
given a directed set $D\subseteq F^\omega 1$, then $r_n(x) \sqsubseteq
\bigsqcup D$ implies $r_n(x) \sqsubseteq y$  for some $y\in D$. This
clearly holds for $F=H_\Sigma$. Due to Proposition \ref{P:ideal} and (4)
above, it also follows for $F$.

\vskip 1mm
(7) $F^\omega 1$ is the conservative completion  of $\mu F$. More
precisely, we prove that  the embedding $\bar u\colon \mu F \to F^\omega
1$  has the universal property w.r.t.\  to continuous maps from $\mu F$ to
cpo's. (Observe that  $\mu F$
is trivially  closed under existing directed joins due to Theorem \ref{T:chain}.) 

First, observe that the image of each $r_n$ is a subset of the image of $\bar u$, see Remark~\ref{R:new}.


Every element $x \in F^\omega 1$ yields a sequence $r_n(x)$ in $\mu F$, and  for the order of Theorem~\ref{T:chain} we clearly have $x=\underset{n\in \N}{\bigsqcup} r_n(x)$. Given a monotone function $f\colon \mu F \to B$ where $B$ is a cpo, we define $\bar f\colon F^\omega 1\to B$ by $\bar f(x) =\underset{n\in \N}{\bigsqcup} f\big(r_n(x)\big)$. This is a continuous function. Indeed, given a directed set $D \subseteq F^\omega 1$ we know from Theorem~\ref{T:chain} that $x=\bigsqcup D$ exists. 
Then $D$ is mutually cofinal with $\{r_n(x); n\in \N\}$. This is clear if $x\in D$. Otherwise, (6) implies that
 each $r_n(x)$ is, due to $r_n(x)\sqsubseteq x$, under some element of
$D$. And for each $y\in D$ the fact that $y\sqsubseteq x$ implies that 
 we have $n$ with $y=r_n(x)$. Consequently, $f[D]$ is mutually cofinal with $\{f\big(r_n(x)\big)\}$in $B$, thus, $f(\sqcup D) = f(x) = \sqcup f[D]$.

\vskip 1mm
(8) We  prove that $\bar u$ factorizes through $t_{\omega +\omega, \omega }=\underset{n\in\N}{\bigcap} t_{\omega+n, \omega}$, see Remark~\ref{R:W}.
We verify by induction a factorization through $t_{\omega +n, \omega}$.
 For $n=1$, see (5). For $n=2$ we apply (5) twice: since $Ft_{\omega +1, \omega } = t_{\omega +2, \omega +1}$, we get
\begin{align*}
\bar u&= t_{\omega  +1, \omega } \cdot F (t_{\omega +1, \omega}\cdot F\bar u \cdot \varphi^{-1})\\
&= t_{\omega+2, \omega }\cdot F(F\bar u \cdot \varphi^{-1})\,.
\end{align*}
Analogously for $n=3,4,\dots$.

\vskip 1mm
(9) The proof of  the theorem follows. First, $F^\omega 1$ is the conservative completion  of $\nu F$, the argument is as in (7).
  It follows that $\varphi^{-1} \colon \mu F \to F(\mu F)$, which is a poset  isomorphism (by our definition of the order of $F(\mu F)$) has at most one continuous extension to $\nu F$. And $\tau$ is continuous (indeed, a poset isomorphism, too). Thus, we just need proving that $\tau$ extends $\varphi^{-1}$. In other words, the inclusion map $m$ of Remark~\ref{R:mono} fulfils $\tau\cdot m = Fm\cdot \varphi^{-1}$, and $Fm$ is also the inclusion map.

The latter is clear in case $F$ preserves inclusion  maps.
Next let $F$ be arbitrary. By Theorem III.4.5 in \cite{AT} there exists a set functor $\bar F$ preserving inclusion which agrees with $F$ on all nonempty sets and  functions and fulfils $\bar F \emptyset\ne \emptyset$ (since $F \emptyset\ne \emptyset$).  Then the categories of algebras for $F$ and $\bar F$ also coincide, thus $\mu F= \mu\bar F$. And the categories of nonempty coalgebras for $F$ and $\bar F$ also  coincide, hence, $\nu F = \nu \bar F$. Since the theorem  holds for $\bar F$, it also holds for $ F$.
\end{proof}

\begin{corollary}\label{C:good}
A  bicontinuous set functor with $F\emptyset \ne \emptyset$ has a terminal coalgebra which is the  conservative completion   of its initial algebra. Its coalgebra structure is the unique continuous extension of the inverted algebra structure of $\mu F$.
\end{corollary}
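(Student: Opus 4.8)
The plan is to deduce the corollary directly from Theorem~\ref{T:good}, the only extra ingredient being that bicontinuity makes the terminal-coalgebra chain converge at stage $\omega$.

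First I would invoke the dual of Ad{\'a}mek's initial-algebra theorem: if $F$ preserves limits of $\omega^{\op}$-sequences, then the limit cone $t_n\colon F^\omega 1\to F^n 1$ also exhibits $F^\omega 1$ as a limit of the shifted chain $(F^{n+1}1)_{n\in\N}$, so the canonical comparison map $F^\omega 1\to F(F^\omega 1)$ is an isomorphism and equips $F^\omega 1$ with a terminal-coalgebra structure; thus $\nu F=F^\omega 1$ (see \cite{A1}, or \cite{A}). By Theorem~\ref{T:chain} this is a cpo, and under the identification $\nu F=F^\omega 1$ the monomorphism $m\colon\mu F\hookrightarrow\nu F$ of Remark~\ref{R:mono} is the embedding $\bar u$ of Notation~\ref{N:chain}(4), while the order of $\nu F$ by cutting is, by Theorem~\ref{T:good}, the order of Theorem~\ref{T:chain} (here the connecting map $t_{\omega+\omega,\omega}$ is the identity).

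Next I would quote Theorem~\ref{T:good}: the poset $\nu F$ and its subposet $\mu F$ have the same conservative completion. Since a cpo is trivially its own conservative completion (it contains itself as a subposet closed under directed joins, and the identity is the unique continuous extension of the identity), and $\nu F$ is a cpo by the previous step, this common completion is $\nu F$ itself; hence $\nu F$ is the conservative completion of $\mu F$. Alternatively one can cite directly step~(7) of the proof of Theorem~\ref{T:good}, where it is shown that $F^\omega 1$, ordered as in Theorem~\ref{T:chain}, is the conservative completion of $\mu F$ along $\bar u$. Finally, the statement about the coalgebra structure is verbatim the last assertion of Theorem~\ref{T:good}: $\tau$ is the unique continuous extension of $\varphi^{-1}$, the inverted algebra structure of $\mu F$.

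I expect no real obstacle here, since the corollary is just the specialization of Theorem~\ref{T:good} to bicontinuous functors; the one place to be slightly careful is the identification $\nu F=F^\omega 1$ of the first step together with the verification that the poset structures and the embeddings of $\mu F$ agree under this identification, which is routine precisely because $t_{\omega+\omega,\omega}$ reduces to the identity in the bicontinuous case.
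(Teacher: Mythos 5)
Your proof is correct and follows essentially the same route as the paper: the author likewise deduces the corollary from step~(7) of the proof of Theorem~\ref{T:good} (that $F^\omega 1$ is the conservative completion of $\mu F$) together with the observation that bicontinuity forces $\nu F = F^\omega 1$. Your extra care about identifying the order by cutting with the order of Theorem~\ref{T:chain} and the embedding $m$ with $\bar u$ is a welcome elaboration of details the paper leaves implicit, but it is not a different argument.
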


This follows from the above  proof: we have seen that the
 conservative completion  of $\mu F$ is $F^\omega 1$ which, for $F$ bicontinuous, is $\nu F$.

\begin{remark}\label{R:good}
In the proof of the above theorem we have seen that $\hat k$ is a domain restriction of $\hat \varepsilon_\omega$: we have $\hat \varepsilon_\omega = t_{\omega +\omega, \omega} \cdot \hat k$. And the homomorphism $\tilde \varepsilon_\omega$ is a domain-codomain restriction of $\hat k$, see Lemma~\ref{L:basic}.  Consequently, Proposition~\ref{P:ideal} yields
$$
\partial_n\cdot \hat k =\tilde \varepsilon_\omega\cdot \partial_n' \colon \nu H_\Sigma \to \mu F \quad (n\in \N)\,.
$$
Here $\partial_n'$ is the domain-restriction of $r_n'$ and $\partial_n$ that of $r_n$, see Remark~\ref{R:new}.

\end{remark}

\section{Free Iterative Algebras}\label{sec5}

\begin{assumpt} Throughout  this section $F$ is a finitary set functor with a given presentation $\varepsilon \colon H_\Sigma \twoheadrightarrow F$, see Definition~\ref{D:pres}.

\end{assumpt}

\begin{remark} 
Let $X$ be a nonempty set.

(1)
 The initial algebra of $F(-)+X$ is precisely the free algebra for $F$ on $X$: notation $\Phi X=\mu F(-)+X$. Indeed, the components of the algebra structure $\varphi\colon F(\Phi X)+X\to \Phi X$ yield an algebra $\Phi X$ for $F$ and a morphism $\eta \colon X \to \Phi X$, respectively. That $F$-algebra clearly  has  the universal property w.r.t.\ $\eta$.

(2)
 Let us  choose an element $p'\in \Sigma_0 +X$. The finitary functor $F(-) +X$ has the following presentation: the signature is $\Sigma_X $ of Remark~\ref{R:psi}. And the natural transformation can, since $H_{\Sigma_X} = H_\Sigma (-) +X$, be chosen to be
$$
\varepsilon +\id_X \colon H_{\Sigma_X} \twoheadrightarrow F(-) +X\,.
$$
This yields an element $p\in F\emptyset +X$ which is $\varepsilon_\emptyset (p')$ in case $p' \in \Sigma_0$, else $p'=p$.
\end{remark}

\begin{notation} 
$\Phi X$ denotes the poset forming the free algebra on $X$ for $F$ ordered by cutting w.r.t $\varepsilon +\id_X$. And $\sim$ is the congruence on $\Phi_\Sigma X$ (the algebra of finite $\Sigma$-trees on $X$) of applying $\varepsilon$-equations, see Corollary~\ref{C:pres}.
\end{notation}

\begin{remark} 
 We do not speak about $(\varepsilon +\id_X)$-equations, since we do not have to: the function
 $$
 \varepsilon_Z +\id_X \colon H_\Sigma Z +X \to FZ+X
 $$
 does not merge flat terms with variables from $X$, hence, every $(\varepsilon+\id_X)$-equation is simply an $\varepsilon$-equation.
 \end{remark}
 
 \begin{corollary} 
 Free algebras for $F$ are free $\Sigma$-algebras modulo $\varepsilon$-equations: $\Phi X = \Phi_\Sigma X/\sim$.
 \end{corollary}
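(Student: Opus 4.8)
The plan is to deduce this directly from Corollary~\ref{C:pres} applied to the finitary functor $F(-)+X$ in place of $F$. First I would check that $\varepsilon+\id_X\colon H_{\Sigma_X}\twoheadrightarrow F(-)+X$ really is a presentation in the sense of Definition~\ref{D:pres}: the functor $F(-)+X$ is finitary, being a coproduct of a finitary functor and a constant, and its components are epic because those of $\varepsilon$ are and $\id_X$ is. Hence Proposition~\ref{P:pres} and Corollary~\ref{C:pres}, instantiated at this presentation over the signature $\Sigma_X$, apply.

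Corollary~\ref{C:pres} then states that the initial algebra of $F(-)+X$ is the quotient of the algebra $\mu H_{\Sigma_X}$ of finite $\Sigma_X$-trees modulo the congruence generated by the $(\varepsilon+\id_X)$-equations. Next I would identify the three ingredients with the objects in the statement. The initial algebra of $F(-)+X$ is, by the first Remark of this section, exactly the free $F$-algebra $\Phi X$ on $X$. The algebra $\mu H_{\Sigma_X}$ of finite $\Sigma_X$-trees is, reading Remark~\ref{R:psi}(1) with $\Sigma_X$ in place of $\Sigma$, precisely $\Phi_\Sigma X$, the algebra of finite $\Sigma$-trees over $X$. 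Finally the generating congruence is $\sim$: by the Remark immediately preceding this Corollary, $\varepsilon_Z+\id_X$ never merges flat terms containing a variable from $X$, so every $(\varepsilon+\id_X)$-equation is literally an $\varepsilon$-equation, and conversely every $\varepsilon$-equation is an $(\varepsilon+\id_X)$-equation; hence the two generated congruences coincide. Assembling these identifications gives $\Phi X=\Phi_\Sigma X/\sim$ as $F$-algebras.

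It remains to match the poset structures, since in this section $\Phi X$ denotes the free algebra ordered by cutting with respect to $\varepsilon+\id_X$. For this I would note that the order by cutting on the initial algebra of $F(-)+X$ is, by Definition~\ref{D:term} together with the construction of the cutting maps $\partial_n$ from the chosen $p'\in(\Sigma_X)_0=\Sigma_0+X$, expressed entirely through $\sim$ and the cuttings of finite $\Sigma_X$-trees; these are exactly the data transported to $\Phi_\Sigma X/\sim$ along the algebra isomorphism above, so that isomorphism is also an order isomorphism. The main point to be careful about is purely bookkeeping: one must use precisely the presentation $\varepsilon+\id_X$ and the induced point $p=\varepsilon_\emptyset(p')$ (or $p=p'$ when $p'\in X$) fixed in the running notation, so that "order by cutting w.r.t.\ $\varepsilon+\id_X$" denotes the same relation on both sides. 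There is no genuine mathematical obstacle here — at the level of algebras the corollary is an immediate instance of Corollary~\ref{C:pres}, and the ordered version follows from the way Definition~\ref{D:term} is phrased.
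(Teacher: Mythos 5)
Your proposal is correct and follows exactly the route the paper intends: the corollary is stated without proof precisely because it is the instance of Corollary~\ref{C:pres} for the presentation $\varepsilon+\id_X$ of $F(-)+X$, combined with the identification of $\mu(F(-)+X)$ with $\Phi X$ and the observation that $(\varepsilon+\id_X)$-equations coincide with $\varepsilon$-equations. Your extra remarks on transporting the order by cutting are harmless bookkeeping beyond what the corollary asserts.
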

 
 \begin{exs}\label{E:P} 
 (1) For $F=\Id$ we choose $p\in X$ and obtain $\Phi X=\N\times X$ ordered as follows:
 $$
 (n,x)<(m,y) \quad \mbox{iff}\quad n<m \quad \mbox{and}\quad x=p\,.
 $$
 
 \vskip 1mm
 (2) The functor $FX = A\times X$ yields
$$
\Phi X= A^\ast \times X
$$
ordered by
$$
(u,x)< (v,y) \quad \mbox{iff\quad $u$ is a prefix of $v$ and \ \ $x=p$}.
$$

\vskip 1mm
(3) The functor $FX = X^{I} \times \{0,1\}$ (corresponding to deterministic automata with a finite  input set $I$) is naturally equivalent to $H_\Sigma$, where $\Sigma$ consists of two operations $a$, $b$ of arity $n=\card I$. Thus $\Phi X$ is the algebra of all finite $n$-ary trees with inner nodes labelled by $\{a,b\}$ and leaves labelled in $X$. The order is by tree cutting.

\vskip 1mm
(4) Let $\cp_k$ denote the subfunctor of the power-set functor given by all subsets of at most $k$ elements. We can describe $\Phi X$ as the algebra of all non-ordered, finite  extensional $k$-branching trees (i.e.\ every node has at most $k$ children) with leaves labelled in $X+\{p\}$.

Here we use a signature $\Sigma$ having, for every $n\leq k$, precisely one $n$-ary operation; the nullary one is  called $p$. Then $\Phi_\Sigma X$ is the algebra of all $k$--branching finite trees with leaves labelled in $X +\{p\}$. It is ordered  by tree cutting. And  given $k$-branching trees $s$ and $s'$ we have $s\sim s'$ iff they have the same  extensional quotient, see \ref{E:chain}(2).  This yields the above description of $\Phi X$.

To describe the order of $\Phi X$, let us call the extensional quotient of $\partial_n s$ (cutting $s$ at height $n$) the $n$-th \textit{extensional cutting}. Then for distinct $s$, $s'$ in $\Phi X$ we have  $s<s'$ iff $s$ is an extensional cutting of $s'$.
\end{exs}

\begin{remark}\label{R:M}
Whereas  the initial algebra for $F(-)+X$ is the free algebra for $F$, the terminal coalgebra
$$
\Psi X= \nu F(-)+X
$$
is the free completely iterative algebra for $F$, as we recall below. The concept of a \textit{recursive equation} in an algebra $\alpha\colon FA \to A$ is given by a set $X$ of recursive variables and a morphism $e\colon X\to FX +A$.
\end{remark}

\begin{definition}\label{D:M}
A \textit{solution} of recursive equation $e\colon X\to FX+A$ in an algebra $(A, \alpha)$ is a morphism
$e^\dagger \colon X\to A$ making the following square
$$
\xymatrix{
X\ar[r]^{e^\dagger} \ar[d]_{e} & A\\
FX +A \ar [r]_{ Fe^\dagger +\id} & FA +A \ar[u]_{[\alpha, \id]}
}
$$
commutative. The algebra $(A, \alpha)$ is called \textit{completely iterative} if every  recursive
equation has a unique solution.
\end{definition}

\begin{exam}\label{E:M}
If $F = H_\Sigma$, we can think of $e$ as a system of recursive equations of the form 
$$
x=\sigma (x_1, \dots , x_n) \quad \mbox{or} \quad x=a \quad (a\in A),
$$
one for every variable $x\in X$ (depending on $e(x)$ lying in the left-hand or right-hand summand of $H_\Sigma X+A$). And then the solution $e^\dagger$ makes an assignment of elements of $A$ to variables from $X$ satisfying those recursive equations: from $x =\sigma (x_1, \dots, x_n)$ we get $e^\dagger(x) =\sigma_A \big( e^\dagger(x_1), \dots , e^\dagger(x_n)\big)$, and from $x =a$ we get $e^\dagger(x)=a$.
\end{exam}

The algebra $\nu H_\Sigma$ of $\Sigma$-trees (with the algebra structure $\tau^{-1}$ of tree-tupling) is completely iterative. For every recursive equation $e\colon X\to H_\Sigma X + \nu H_\Sigma$ the solution $e^{\dagger}\colon X \to \nu H_\Sigma$ can be defined as follows: given $n\in \N$ we describe the cut trees $\partial_n' e^\dagger (x)$ for all variables $x\in X$ simultaneously by induction on $n\in \N$:

(1) $ \partial_0' e^\dagger (x)$ is the singleton tree labelled by $p$.

 (2) Given  $\partial_n' e^\dagger (x)$ for all $x\in X$, then for every $x\in X$ with $e(x)=\sigma (x_1,\dots , x_n)$ in the left-hand summand $H_\Sigma X$  we define $\partial_{n+1}' e^\dagger (x)$ to be the tree with root labelled by $\sigma$ and with $n$ subtrees $\partial_n' e^\dagger(x_i)$, $i=1, \dots , n$. Whereas if $e(x)= s\in \nu H_\Sigma$, then $\partial_{n+1}' e^\dagger (x) =\partial_{n+1}' s$.

\begin{theorem}[See \cite{M}]\label{T:M}
Let $\tau_X \colon \Psi X \to F(\Psi X)+X$ be the terminal coalgebra for $F(-) +X$. The components of $\tau_X^{-1}$ make $\Psi X$ an $F$-algebra with a morphism $\eta\colon X\to \Psi X$. This is the free completely iterative algebra for $F$ w.r.t. the universal morphism $\eta$.

In particular, $(\nu F, \tau^{-1})$ is the initial completely iterative algebra.
\end{theorem}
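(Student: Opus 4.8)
The plan is to recall the standard coalgebraic argument behind Milius's theorem \cite{M}; note that finiteness of $F$ plays no role here. Throughout I write the two components of the inverted terminal-coalgebra structure $\tau_X^{-1}\colon F(\Psi X)+X\to\Psi X$ as an $F$-algebra structure $\alpha_X\colon F(\Psi X)\to\Psi X$ together with the morphism $\eta\colon X\to\Psi X$; equivalently $\tau_X^{-1}=[\alpha_X,\eta]$, so that $\tau_X\cdot\alpha_X=\mathrm{inl}$ and $\tau_X\cdot\eta=\mathrm{inr}$.

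\emph{Complete iterativity of $(\Psi X,\alpha_X)$.} Given a recursive equation $e\colon Y\to FY+\Psi X$, I would first glue $e$ and $\tau_X$ into a single coalgebra
$$
c\colon Y+\Psi X\longrightarrow F(Y+\Psi X)+X
$$
for $F(-)+X$: on the summand $\Psi X$ take $(F\mathrm{inr}+\id)\cdot\tau_X$, and on the summand $Y$ run $e$, carry its $FY$-part into $F(Y+\Psi X)$ by $F\mathrm{inl}$ and its $\Psi X$-part through $(F\mathrm{inr}+\id)\cdot\tau_X$, then fold the two resulting copies of $F(Y+\Psi X)$ together. Let $h\colon(Y+\Psi X,c)\to(\Psi X,\tau_X)$ be the unique coalgebra homomorphism. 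Since $c\cdot\mathrm{inr}=(F\mathrm{inr}+\id)\cdot\tau_X$, the injection $\mathrm{inr}$ is itself a coalgebra homomorphism, so $h\cdot\mathrm{inr}$ is an endomorphism of the terminal coalgebra and hence $h\cdot\mathrm{inr}=\id$. I would then set $e^\dagger:=h\cdot\mathrm{inl}$ and check, by a short diagram chase using that $h$ is a homomorphism and $h\cdot\mathrm{inr}=\id$, that $e^\dagger$ satisfies the solution square of Definition~\ref{D:M}. For uniqueness: if $s$ solves $e$, then unfolding the solution square together with $\tau_X\cdot\alpha_X=\mathrm{inl}$ shows that the copairing $[s,\id]\colon(Y+\Psi X,c)\to(\Psi X,\tau_X)$ is a coalgebra homomorphism, whence $[s,\id]=h$ and $s=h\cdot\mathrm{inl}=e^\dagger$.

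\emph{Freeness.} Given a completely iterative algebra $(A,\alpha)$ and a map $f\colon X\to A$, I would consider the recursive equation with variable object $\Psi X$,
$$
e_f:=(\id_{F(\Psi X)}+f)\cdot\tau_X\colon\Psi X\to F(\Psi X)+A,
$$
and put $\bar f:=e_f^\dagger$. Using $\tau_X^{-1}=[\alpha_X,\eta]$, the solution square for $e_f$ unfolds to $\bar f\cdot\tau_X^{-1}=[\alpha\cdot F\bar f,\ f]$; reading off the two coproduct components gives $\bar f\cdot\alpha_X=\alpha\cdot F\bar f$, so $\bar f$ is an $F$-algebra homomorphism, and $\bar f\cdot\eta=f$. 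Conversely, any $F$-algebra homomorphism $g\colon\Psi X\to A$ with $g\cdot\eta=f$ satisfies $g=[\alpha\cdot Fg,\ f]\cdot\tau_X=[\alpha,\id]\cdot(Fg+\id)\cdot e_f$, i.e.\ $g$ solves $e_f$, so $g=\bar f$ by uniqueness of solutions. This establishes the freeness of $(\Psi X,\alpha_X,\eta)$; the special case $X=\emptyset$, where $\Psi\emptyset=\nu F$ and freeness degenerates to initiality, is the final assertion.

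I expect the only genuine work to be the coproduct bookkeeping: writing down $c$ and $e_f$ precisely, and verifying the two equivalences used above, namely that $s$ solves $e$ exactly when $[s,\id]$ is a coalgebra homomorphism into $\Psi X$, and that $g$ is an $F$-algebra homomorphism extending $f$ exactly when $g$ solves $e_f$. It is worth recording that $\Psi X$, being the terminal coalgebra of a finitary set functor, is a genuine set, hence admissible as a set of recursive variables in Definition~\ref{D:M}. Finally, the explicit inductive construction of $e^\dagger$ by cut trees given above for $F=H_\Sigma$ is precisely this argument made concrete, the cuttings $\partial'_n e^\dagger$ realizing the finality of $\Psi X$.
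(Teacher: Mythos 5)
Your proof is correct: the paper itself gives no argument for this theorem (it simply cites Milius), and your reconstruction is the standard one from that reference --- gluing $e$ and $\tau_X$ into a coalgebra on $Y+\Psi X$ and using terminality to get existence and uniqueness of solutions, then obtaining the universal extension of $f\colon X\to A$ as the solution of $e_f=(\id+f)\cdot\tau_X$. The coproduct bookkeeping you flag does check out, so nothing further is needed.
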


\begin{notation} 
$\Psi X$ denotes the poset forming the free completely iterative algebra on $X$ for $F$ ordered by cutting w.r.t. $\varepsilon +\id_X$. And $\sim^\ast$ is the  congruence on $\Psi_\Sigma X$ (the algebra of $\Sigma$ trees over $X$) of a possibly infinite application of $\varepsilon$-equations, see Remark~\ref{R:pres}.
\end{notation}

\begin{corollary} 
Let $F$ be a bicontinuous set functor. The free completely iterative algebra $\Psi X$  on a set $X\ne \emptyset$ is a cpo which is the 
conservative completion  of the free algebra $\Phi X$.

The algebra structure of $\Psi X$ is the unique continuous extension of the algebra structure of $\Phi X$.
 \end{corollary}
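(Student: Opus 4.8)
The plan is to obtain the statement by applying Corollary~\ref{C:good} to the functor $G:=F(-)+X$.

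\textbf{Step 1 (properties of $G$).} I would first check that $G$ is a finitary set functor with $G\emptyset\ne\emptyset$ and, when $F$ is bicontinuous, bicontinuous. Finitariness is clear, since $F$ is finitary, the constant functor at $X$ is finitary, and binary coproduct preserves filtered colimits in $\Set$. For $G\emptyset=F\emptyset+X$ the point is that $X\ne\emptyset$ forces $G\emptyset\ne\emptyset$ \emph{unconditionally}, and the chosen $p'\in\Sigma_0+X$ of the presentation $\varepsilon+\id_X\colon H_{\Sigma_X}\twoheadrightarrow G$ yields the point $p\in G\emptyset$. Bicontinuity reduces, by the same three-summand reasoning, to the fact that binary coproduct preserves limits of $\omega^{\op}$-sequences in $\Set$: for $\omega^{\op}$-sequences $(A_n)$, $(B_n)$ a compatible sequence $(c_n)$ in $A_n+B_n$ lies entirely in the $A$-part or entirely in the $B$-part, since $\{n:c_n\in A_n\}$ is downward closed and, if a proper nonempty initial segment, would break compatibility at its largest element; hence $\lim(A_n+B_n)=\lim A_n+\lim B_n$.

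\textbf{Step 2 (cpo and completion).} By Remark~\ref{R:M} and Theorem~\ref{T:M} we have $\Psi X=\nu G$ and $\Phi X=\mu G$, and the order by cutting on $\nu G$, $\mu G$ relative to the presentation $\varepsilon+\id_X$ and the point $p$ (Definitions~\ref{D:cut}, \ref{D:term}) is exactly the order placed on $\Psi X$, $\Phi X$ in the present section. Hence Corollary~\ref{C:good}, applied to $G$, immediately yields that $\Psi X$ is a cpo which is the conservative completion of its subposet $\Phi X$.

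\textbf{Step 3 (algebra structure).} Applied to $G$, Corollary~\ref{C:good} says the coalgebra structure $\tau_X\colon\Psi X\to F(\Psi X)+X$ is the unique continuous extension of $\varphi^{-1}$, where $\varphi\colon F(\Phi X)+X\to\Phi X$ is the initial $G$-algebra; equivalently, inverting an order-isomorphism, $\tau_X^{-1}$ is the unique continuous extension of $\varphi$. By Theorem~\ref{T:M} the $F$-algebra structures of $\Psi X$ and $\Phi X$ are the first components $\alpha_X=\tau_X^{-1}\cdot\mathrm{inl}$ and $\alpha=\varphi\cdot\mathrm{inl}$. Transporting the orders along these bijections (as in the conventions preceding Theorem~\ref{T:good}), $\alpha_X$ and $\alpha$ become order-isomorphisms onto the sub-cpo's of $\Psi X$, $\Phi X$ consisting of the trees whose root symbol lies in $\Sigma$. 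Restricting ``$\tau_X^{-1}$ extends $\varphi$'' to the left coproduct summand shows $\alpha_X$ extends $\alpha$ continuously, and since the sub-cpo $\alpha_X[F(\Psi X)]$ is the conservative completion of its subposet $\alpha[F(\Phi X)]$ (every root-in-$\Sigma$ tree being the directed join of its cuttings $\partial_n$, $n\ge 1$, which are finite and root-in-$\Sigma$), the continuous extension is unique.

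\textbf{Main obstacle.} Once Corollary~\ref{C:good} is in hand almost everything is bookkeeping; the single genuinely new verification is the bicontinuity of $G$, i.e.\ that binary coproduct commutes with $\omega^{\op}$-limits in $\Set$ (Step~1). A minor technical point, dealt with exactly as in step~(9) of the proof of Theorem~\ref{T:good}, is that if $F$ fails to preserve the inclusion $F(\Phi X)\hookrightarrow F(\Psi X)$ one first replaces $F$ by the functor $\bar F$ of \cite[III.4.5]{AT}, which agrees with $F$ on nonempty sets and maps and has the same algebras, nonempty coalgebras, free algebras, and free completely iterative algebras.
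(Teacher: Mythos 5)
Your proposal is correct and follows exactly the paper's route: the corollary is obtained by applying Corollary~\ref{C:good} to the functor $F(-)+X$, with your Steps 1--3 merely spelling out the hypotheses (finitariness, nonemptiness at $\emptyset$, bicontinuity of the coproduct) and the bookkeeping that the paper leaves implicit. No discrepancy with the paper's one-line proof.
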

 
 This is an application of Corollary~\ref{C:good} to $F(-) +X$.
 
 \begin{example} 
 (1) For $F=\Id$ the conservative completion  of $\Phi X= \N \times X$ adds just one maximum element as $\underset{n\in\N}{\bigsqcup} (n,p)$. Thus $\Psi X =\N\times X+1$. 
 
 \vskip 1mm
 (2) For $FX = A\times X$ the conservative completion  of $A^\ast \times X$ adds joins to all sequences $(u_0, p)<(u_1, p) <(u_2, p) < \cdots$ where each  $u_n$ is a prefix of $u_{n+1}$ ($n\in \N$). That join is expressed by the infinite word in $A^\omega$ whose  prefixes are all $u_n$. We thus get
 $$
 \Psi X = A^\ast \times X+A^\omega\,.
 $$
 
 \vskip 1mm  
 (3) For the bicontinuous functor $F=\cp_k$ we can describe $\Psi X$ as the algebra of all extensional $k$-branching trees with leaves labelled in $X +\{p\}$.
 
 Indeed, this algebra with the order by  extensional cutting (see Example~\ref{E:P}), is the completion of its subalgebra $\Phi X$ of finite trees. To see this, observe that every strictly increasing sequence $s_0 < s_1<s_2 \cdots$ in $\Psi X$ has a unique upper bound: the tree $s$ defined  level by level so that, given $n$, its extensional cutting at $n$ is the same as that of $s_k$ for all  but finitely many $k\in \N$. Therefore, given a continuous  function $f\colon \Phi X \to B$ where $B$ is a cpo, the unique continuous extension $\bar f\colon \Psi X \to B$ is given by $\bar f(s) = \underset{n\in\N}{\bigsqcup} f(s_n)$ where $s_n$ is the extensional cutting of $s$ at level $n$.

(4) For the functor $FX = X^I \times \{0,1\}$ we have  $\Psi X= n$-ary trees with leaves labelled in $X$ and inner nodes  labelled in $\{a,b\}$. The order is by cutting.

\vskip 1mm
(5) Aczel and Mendler introduced in \cite{AMe} the functor $(-)_2^3$ defined by
$$
X_2^3 =\big\{(x_1, x_2, x_3) \in X^3; x_i=x_j \ \mbox{for some\ } i\ne j\big\}\,.
$$
This is a bicontinuous functor with a presentation using $\Sigma =\{\sigma_1, \sigma_2, \sigma_3\}$, all operations binary, and the following $\varepsilon$-equations
$$
\sigma_1(x,x)= \sigma_2(x,x)= \sigma_3(x,x)\,.
$$
Here $\varepsilon \colon H_\Sigma \to (-)_2^3$ is given by $\sigma_1(x,y)\mapsto (x,x,y)$, $\sigma_2(x,y) \mapsto  (x,y,y)$ and $\sigma_3(x,y)\mapsto  (x,y,x)$.

The free algebra $\Phi X = \Phi_\Sigma X/\sim$ is described  as follows:
$\Phi_\Sigma X$ consists of finite binary trees with leaves labelled in $X$ and inner nodes labelled in $\Sigma$. And $s\sim s'$ means that we can obtain $s$ from $s'$ be relabelling arbitrarily inner nodes whose left and right child yield the same tree. The order is by cutting.

The free completely iterative algebra is $\Psi X = \Psi_\Sigma x\big/\sim^\ast$, where $\Psi_\Sigma X$ are binary trees with leaves labelled in $X$ and  inner nodes  labelled in $\Sigma$. And $\sim^\ast$ allows infinite relabelling of the type above.  $\Psi X$ is a cpo which is the conservative completion  of $\Phi X$.
\end{example}

\begin{corollary} 
For every finitary set functor the free algebra on a set $X\ne \emptyset$ has the same conservative completion  as the iterative algebra on $X$.
The algebra structure of $\Psi X$ is, again, the unique continuous extension of the  algebra structure of $\Phi X$.
\end{corollary}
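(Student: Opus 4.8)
The plan is to derive this from Theorem~\ref{T:good} applied to the functor $G = F(-)+X$, in exactly the way Corollary~\ref{C:good} specialized that theorem in the bicontinuous case. First I would check that the hypotheses of Theorem~\ref{T:good} transfer to $G$. The functor $G$ is finitary: the constant functor $\Delta X$ preserves colimits of connected diagrams, filtered diagrams are connected, and a coproduct of functors preserving filtered colimits again preserves them; since $F$ is finitary, so is $G$. Moreover $G\emptyset = F\emptyset + X \ne \emptyset$, since $X\ne\emptyset$ — note that here no assumption on $F\emptyset$ is needed. A presentation of $G$ with epic components is already available, namely $\varepsilon+\id_X\colon H_{\Sigma_X}\twoheadrightarrow G$ with $\Sigma_X=\Sigma+X$, and since $(\Sigma_X)_0=\Sigma_0+X$ is nonempty we may choose a nullary symbol $p'$ in it — precisely the choice fixed in the Notation preceding this corollary, which in turn induces the point in $G\emptyset$.

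Next I would match up the objects. By Theorem~\ref{T:M} together with the remarks preceding it, the initial algebra of $G$ is $\mu G=\Phi X$ and its terminal coalgebra is $\nu G=\Psi X$; and the order by cutting on these posets with respect to the presentation $\varepsilon+\id_X$ is, by definition, the order carried by $\Phi X$ and $\Psi X$ throughout Section~5. Theorem~\ref{T:good}, applied to $G$, then yields at once that $\Phi X=\mu G$ and $\Psi X=\nu G$ have the same conservative completion, namely the limit $G^\omega 1$ of the terminal-coalgebra chain of $G$, into which $\Psi X$ embeds. This is the first assertion of the corollary.

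For the statement about the algebra structure, Theorem~\ref{T:good} gives that the coalgebra structure $\tau_X\colon\Psi X\to G(\Psi X)$ is the unique continuous extension of the inverted algebra structure $\varphi_X^{-1}\colon\Phi X\to G(\Phi X)$. Both $\tau_X$ and $\varphi_X$ are poset isomorphisms (that is how the orders on $G(\Psi X)$ and $G(\Phi X)$ are defined), hence so are their inverses, and for isomorphisms continuity is symmetric; therefore $\tau_X^{-1}\colon G(\Psi X)\to\Psi X$ is the unique continuous extension of $\varphi_X\colon G(\Phi X)\to\Phi X$. Passing to the summand $F(-)\hookrightarrow F(-)+X$, together with the universal morphism $\eta$, rephrases this as the statement that the $F$-algebra structure of $\Psi X$ is the unique continuous extension of that of $\Phi X$. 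I do not expect a genuine obstacle here: the only delicate point — the bookkeeping around inclusion-preservation of $G$ — is already dealt with once and for all inside the proof of Theorem~\ref{T:good} via the auxiliary functor $\bar G$ agreeing with $G$ on all nonempty sets and functions, so nothing beyond checking the hypotheses for $G$ remains to be done.
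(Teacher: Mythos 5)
Your proposal is correct and follows the paper's route exactly: the paper proves this corollary in one line as an application of Theorem~\ref{T:good} to the functor $F(-)+X$, and your verification of the hypotheses (finitariness of $F(-)+X$, nonemptiness of $F\emptyset+X$ from $X\ne\emptyset$, the presentation $\varepsilon+\id_X$) together with the identifications $\mu G=\Phi X$, $\nu G=\Psi X$ is precisely the intended specialization. The extra care you take in passing from the coalgebra-structure statement of Theorem~\ref{T:good} to the algebra-structure statement of the corollary is sound and merely makes explicit what the paper leaves implicit.
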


This is an application of  Theorem~\ref{T:good} to $F(-)+X$.

\begin{example} 
For the finite power-set functor $\cp_f$ the algebra $\Psi  X$ can be described as the quotient $\Psi_\Sigma X\big/\sim^\ast$, where $\Psi_\Sigma X$ are the finitely branching trees with leaves labelled in $X + \{p\}$. And $s\sim^\ast s'$ means  that the extensional cuttings of $s$ and $s'$ are the same for every level $n$.

A better description: $\Psi X$ is the set of all finitely branching strongly extensional trees (see Example~\ref{E:chain}(2)), with leaves labelled in $X + \{p\}$. The proof is completely analogous to that for $\nu \cp_f$ in Worrell's paper \cite{W}.

$\Psi X$ is ordered by extensional cutting. This is not a cpo. To see this, consider  an arbitrary strongly extensional tree $s$ which is not finitely branching. Thus, $s\notin \Psi X$. Each extensional cutting is finite (since for every $n$ we only have a finite number of extensional trees of height $n$) and  this yields an increasing $\omega$-sequence in $\Phi X$ that has no join in $\Psi X$.

The common conservative completion  of $\Phi X$ and $\Psi X$ is the algebra of all compactly branching  strongly extensional trees with leaves labelled in $X+\{p\}$. The proof is, again, analogous to that for $\nu \cp_f$ in \cite{W}.
\end{example}

\section{Approximate Solutions} 

In this section we prove that  solutions of iterative equations in free iterative algebras are obtainable as joins of $\omega$-chains of approximate solutions. This is true for  every finitary set functor $F$ and every nonempty set of recursion variables. We first prove the corresponding result for the terminal coalgebra considered as an algebra $\tau^{-1} \colon F(\nu F) \to \nu F$.

Throughout this section a presentation $\varepsilon \colon H_\Sigma \to F$ is assumed and a choice of an element $p\in F\emptyset +X$ where $X$ is a fixed set of ``recursion'' variables. In particular at the begining we set $X=\emptyset$ and choose $p\in F\emptyset$, i.e., we work with a finitary functor with $F\emptyset \ne \emptyset$.

We continue to use $\tau$, $\varphi$, $\partial_n, \dots $ for $F$ and $\tau'$, $\varphi'$, $\partial_n', \dots$  for $H_\Sigma$ (as in Section~\ref{sec4}). We know that $\nu F$ is the initial completely iterative algebra. We are going to describe solutions $e^\dagger \colon X \to \nu F$ of recursive equations $e\colon X \to FX + \nu F$ as joins of $\omega$-chains
$$
e^\dagger_0 \sqsubseteq e^\dagger_1 \sqsubseteq e^\dagger_2 \dots \colon X\to \mu F
$$
of approximate solutions in the initial algebra. Here we work with the poset $(\nu F)^X$ ordered pointwise and its subposet $(\mu F)^X$.

Recall $\partial_n$ from Remark~\ref{R:new}.

\begin{definition}\label{D:appr}
The $k$-th \textit{approximate solution} $e^\dagger_k \colon X\to \nu F$ of a recursive equation $e\colon X \to FX + \nu F$ is defined by induction on $k\in \N$ as follows:
$$
e^\dagger_0 \colon X \to \mu F \quad \mbox{is the least element of the poset $(\mu F)^X$,}
$$
and given $e^\dagger_k$, then the following square defines $e^\dagger_{k+1}$:
$$
\xymatrix@C=5pc{ 
X \ar[r]^{e^\dagger_{k+1}} \ar[d]_{e} & \mu F\\
FX + \nu F\ar[d]_{\id+\partial_k} &\\
FX + \mu F \ar[r]_{F e^\dagger_k + \id} & F(\mu F)+ \mu F \ar[uu]_{[\varphi, \id]}
}
$$
\end{definition}

 We are going to prove that the unique solution $e^\dagger$ of $e$ in $\nu F$ is the join of the $\omega$-chain  $e^\dagger_k$  considered in $(\nu F)^X$. Or, more precisely, for the inclusion $m\colon \mu F \to \nu F$ of Remark~\ref{R:mono} we have 
 $$
 e^\dagger = \bigsqcup\limits_{k\in\N} m \cdot e^\dagger_k\,.
 $$

 \begin{example}\label{E:pol}
 If $F=H_\Sigma$ then $e_n^\dagger$ is precisely the cutting $\partial_n' e^\dagger$ of Example~\ref{E:M}. This is obvious for $n=0$, and the induction step is easy.
 \end{example}

\begin{theorem}
Let $F$ be a finitary set functor with $F\emptyset \ne \emptyset$. For every recursive equation $e\colon X\to FX + \nu F$ the unique solution in $\nu F$ is the join of the $\omega$-chain of approximate solutions $e^\dagger_n$ $(n\in \N)$ in the poset $(\nu F)^X$.
\end{theorem}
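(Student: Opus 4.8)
The strategy is to reduce the general statement to the already-established polynomial case $F = H_\Sigma$ via the presentation $\varepsilon\colon H_\Sigma \twoheadrightarrow F$, using the split epimorphism $\hat k\colon \nu H_\Sigma \to \nu F$ of Lemma~\ref{L:split} together with the compatibility relations collected in Remark~\ref{R:good} and Proposition~\ref{P:ideal}. First I would fix the recursive equation $e\colon X\to FX+\nu F$ and lift it along $\varepsilon$: since $\varepsilon_{\nu F}\colon H_\Sigma(\nu F)\to F(\nu F)$ is epic and, composing with the section $k^\ast\colon \nu F\to \nu H_\Sigma$, I can produce a recursive equation $\bar e\colon X\to H_\Sigma X+\nu H_\Sigma$ for $H_\Sigma$ lying ``over'' $e$ in the sense that the square relating $\bar e$ to $e$ through the presentation morphisms commutes. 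The point is that $\hat k$ (being a coalgebra homomorphism for $F$, hence compatible with the algebra structures $\tau^{-1}$, $(\tau')^{-1}$) carries a solution of $\bar e$ in $\nu H_\Sigma$ to a solution of $e$ in $\nu F$; by uniqueness of solutions in the initial completely iterative algebra (Theorem~\ref{T:M}), we get $e^\dagger = \hat k\cdot \bar e^\dagger$.

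The second ingredient is to show that the approximate solutions are equally well-behaved under $\hat k$. I would prove by induction on $k\in\N$ that $m\cdot e^\dagger_k = \hat k\cdot m'\cdot \bar e^\dagger_k$ as maps $X\to\nu F$ (equivalently, using Lemma~\ref{L:basic}, that $e^\dagger_k = \tilde\varepsilon_\omega\cdot \bar e^\dagger_k$ at the level of $\mu F$). The base case is immediate since both sides are the least element, and $\hat k$, $\tilde\varepsilon_\omega$ preserve least elements (Remark~\ref{R:least}). The induction step unwinds Definition~\ref{D:appr}: the defining square for $e^\dagger_{k+1}$ involves $\id+\partial_k$, $Fe^\dagger_k+\id$, and $[\varphi,\id]$, while the corresponding square for $\bar e^\dagger_{k+1}$ uses $\id+\partial_k'$, $H_\Sigma\bar e^\dagger_k+\id$, $[\varphi',\id]$; the relation $\partial_n\cdot\hat k = \tilde\varepsilon_\omega\cdot\partial_n'$ from Remark~\ref{R:good}, naturality of $\varepsilon$, the fact that $\varphi\cdot\varepsilon_{\mu F} = \tilde\varepsilon_\omega\cdot\varphi'$ type compatibility from Lemma~\ref{L:basic}, and the lifted-equation square all combine to push the identity through one level. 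This is a routine diagram chase, so I would only indicate the key substitutions.

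Finally I would assemble the conclusion. By Example~\ref{E:pol} we know $\bar e^\dagger_n = \partial_n'\,\bar e^\dagger$ pointwise, and by Remark~\ref{R:cut}(a) (applied coordinatewise in $(\nu H_\Sigma)^X$) we have $\bar e^\dagger = \bigsqcup_n \bar e^\dagger_n$ in $(\nu H_\Sigma)^X$. Now apply $\hat k$: since $\hat k$, being a split epimorphism onto $\nu F$ whose order by cutting is the cpo structure of Theorem~\ref{T:chain} via Theorem~\ref{T:good}, and since $\hat k$ is monotone with $\hat k(\partial_n' s) = \partial_n(\hat k s)$ (Remark~\ref{R:good}), it maps the chain $(\bar e^\dagger_n)$ cofinally onto the chain $(\partial_n e^\dagger) = (r_n\text{-}\text{images})$, whose join is $e^\dagger$ by the ``$x=\bigsqcup_n r_n(x)$'' identity established inside the proof of Theorem~\ref{T:good}, step~(7). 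Combined with the second step, $m\cdot e^\dagger_n = \hat k\cdot m'\cdot\bar e^\dagger_n$, this gives $e^\dagger = \bigsqcup_n m\cdot e^\dagger_n$ in $(\nu F)^X$.

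\textbf{Main obstacle.} The delicate point is not any single diagram but making precise that $\hat k$ \emph{preserves the relevant directed joins} coordinatewise in the power $(\nu F)^X$ — i.e.\ that applying $\hat k$ commutes with the join $\bigsqcup_n$ along an increasing $\omega$-chain. This needs the description of joins via cuttings (Remark~\ref{R:cut}) plus the compatibility $\hat k\cdot\partial_n' = \partial_n\cdot\hat k$, so that the image chain is again a chain of cuttings of the image point; once that is in hand, the join on the $\nu F$ side is forced by the uniqueness of upper bounds of strictly increasing chains (Theorem~\ref{T:chain}). The rest is bookkeeping with the presentation morphisms.
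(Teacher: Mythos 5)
Your proposal is correct and follows essentially the same route as the paper: lift $e$ to an $H_\Sigma$-equation via the splittings of $\varepsilon_X$ and $\hat k$, show $e^\dagger=\hat k\cdot\bar e^\dagger$ and $e^\dagger_n=\tilde\varepsilon_\omega\cdot\bar e^\dagger_n$ by the same diagram chases, and transport the join from $(\nu H_\Sigma)^X$ using $\partial_n\cdot\hat k=\tilde\varepsilon_\omega\cdot\partial_n'$. The only cosmetic difference is that you justify the final join-preservation step via the identity $x=\bigsqcup_n r_n(x)$ applied to $e^\dagger$, whereas the paper appeals directly to $\hat k$ being the order-inducing quotient map; both are adequate.
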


\begin{proof}
We know from Example~\ref{E:M} that the theorem holds for $H_\Sigma$. We apply this to the following  recursive equation  w.r.t. $H_\Sigma$:
$$
e' \equiv X \xrightarrow{\ \ e\ \ } FX +\nu F \xrightarrow{\ \ b+k^\ast\ \ }H_\Sigma X + \nu H_\Sigma
$$
where $b$ is a splitting  of $\varepsilon_X$ and $k^\ast$ splits $\hat k$, see Lemma~\ref{L:split}. Thus, $e = (\varepsilon_X + \hat k)\cdot e'$. We know that $(e')^\dagger$ is the join $(e')^\dagger = \underset{n\in \N}{\bigsqcup} m' \cdot (e')^\dagger$ for the inclusion $m' \colon \mu H_\Sigma \to \nu H_\Sigma$. From that we derive $e^\dagger = \underset{n\in \N}{\bigsqcup} m \cdot e^\dagger_n$ by proving  that (1) $e^\dagger =\hat k \cdot (e')^\dagger$ and (2) $e^\dagger_n =\tilde \varepsilon_\omega  \cdot (e')^\dagger_n$ for $n\in \N$ (see Remark~\ref{R:simple}). Indeed, 
we then have
$$
e^\dagger =\hat k \cdot \underset{n\in \N}{\bigsqcup} m' \cdot  (e')^\dagger_n = \underset{n\in \N}{\bigsqcup} \hat k \cdot m' \cdot (e')^\dagger_n
$$
since post-composition with $\hat k$ preserves  the order and all joins that exist in $(\mu H_\Sigma)^X$: recall
from Remark~\ref{R:least} 
 that $\hat k \colon \nu H_\Sigma \to (\nu  H_\Sigma)\big/\sim^\ast$ is the quotient map inducing the order by cutting on $\nu F$. We get from Lemma~\ref{L:basic}
$$
e^\dagger =\underset{n\in \N}{\bigsqcup} m\cdot \tilde \varepsilon_\omega\cdot  (e')^\dagger_n = \underset{n\in \N}{\bigsqcup}  m \cdot e^\dagger_n
$$
as required.

(1) Proof of 
$e^\dagger= \hat k \cdot (e')^\dagger$. It is sufficient to prove that $\hat k\cdot (e')^\dagger$ solves $e$ in the algebra $(\nu F, \tau^{-1})$, i.e., it  is equal to $[\tau^{-1}, \id] \cdot \big( F[\hat k\cdot (e')^\dagger]+\id\big)\cdot e$.
This follows from the commutative diagram below, since $e=(\varepsilon_X +\hat k) \cdot e'$:
$$
\xymatrix@C=2pc{ 
X \ar[r]^{(e')^\dagger} \ar[d]_{e'}& \nu H_\Sigma \ar[rr]^{\hat k} && \nu F\\
H_\Sigma X\!\! +\! \nu H_\Sigma \ar[r]^<{\quad  H_\Sigma (e')^\dagger\! +\id}  
 \ar[dr]_{H_\Sigma[\hat k\cdot (e')^\dagger]+\id} \ar[dd]_{\varepsilon_X +\hat k}     
& H_\Sigma (\nu H_\Sigma)\! +\! \nu H_\Sigma \ar[r]^{ \varepsilon_{\nu H_\Sigma} +\id}\ar[u]_{[(\tau')^{-1}, \id]}  \ar[d]^{H_\Sigma \hat k+\id}  & F(\nu H_\Sigma) \!+ \nu H_\Sigma\ar@{}[d]|{(N)} \ar[ddr]^{F\hat k +\hat k}&\\
&H_\Sigma(\nu F)\! +\!\nu H_\Sigma \ar[drr]^{\varepsilon_{\nu F} +\hat k} &&\\
FX\! +\!\nu F \ar@{}[ur]|{(N)}\ar[rrr]_{F[\hat k \cdot (e')^\dagger ] +\id}&&& F(\nu F) \!+ \!\nu F \ar[uuu]_{[\tau^{-1}, \id]}
}
$$
The upper left-hand  part expresses that $(e')^\dagger$ solves $e'$. For all the other inner parts consider the components of the corresponding coproducts separately. The right-hand components commute in each case trivially. The left-hand components of the parts denoted by $(N)$ commute since $\varepsilon$ is natural. For the upper right-hand part recall that $\hat k$ is a homomorphism, i.e., $\tau \cdot \hat k = F  \hat k\cdot \varepsilon_{\nu H_\Sigma}$.

(2) The proof of  $e^\dagger_n =\tilde k \cdot (e')^\dagger_n$ is performed by induction on $n\in N$. The case $n=0$ is trivial since $\tilde \varepsilon_\omega$ preserves the least element (see Remark~\ref{R:least}) and $(e')^\dagger_0$ is the constant map of that value. The induction step follows from the commutative diagram below:
$$
\xymatrix@C=2pc{ 
& X \ar[dl]_{e} \ar[d]_{e'}\ar[r]^{(e')^\dagger_{n+1}}&
   \mu H_\Sigma \ar[r]^{\tilde \varepsilon_\omega} &  \mu F\\
   FX\! +\!\nu F \ar[ddd]_{\id +\partial_n} & H_\Sigma X \!+ \!\nu H_\Sigma \ar[l]_{\varepsilon_X + \hat k} \ar[d]_{\id +\partial_n'}&&\\
   & H_\Sigma X \!\!+\!\mu H_\Sigma  \ar[ddl]_{\varepsilon_X +\tilde \varepsilon_\omega} \ar[r]_{H_\Sigma(e')^\dagger_n +\id\ \ \  }&  H_\Sigma(\mu H_\Sigma)\!+\! \mu H_\Sigma \ar[uu]_{[\varphi', \id]} \ar[d]^{\varepsilon_{\mu H_\Sigma}+\tilde \varepsilon_\omega}&\\
   && F(\mu H_\Sigma) \!+ \!\mu F \ar[dr]^{F\tilde \varepsilon_\omega +\id}&\\
   FX \!+ \!\mu F \ar[rru]^{F(e')^\dagger_n +\id} \ar[rrr]_{F e^\dagger_n +\id} &&& F(\mu F)\! +\! \mu F \ar[uuuu]_{[\varphi, \id]}
   }
   $$
The lower triangle commutes since $e^\dagger_n =\tilde \varepsilon_\omega \cdot (e')^\dagger_n$ by induction  hypothesis. The upper square is the definition of $(e')^\dagger_{n+1}$ and the part right of it commutes due to $\tilde\varepsilon_\omega$ being a homomorphism, see Lemma~\ref{L:basic}. The middle part commutes by naturality of $\varepsilon$. For the lower left-hand part 
see Remark~\ref{R:good}.

(3)
It remain 
$s\bigsqcup (e_n^\dagger)$
to verify that $(e_n)_{n\in \N}$ 
is an $\omega$-chain in $(\nu F)^X$. For $(e')_n^\dagger$ this follows from $(e')^\dagger_n=\partial_n \cdot (e')^\dagger$, see Example~\ref{E:M}. Thus, we only need to observe that $(e')_n^\dagger\leq (e')_{n+1}^\dagger$ implies $\tilde \varepsilon_\omega \cdot (e')_n^\dagger \leq \tilde \varepsilon_\omega\cdot (e')_{n+1}^\dagger$.
 Indeed, see Remark~\ref{R:least}.
\end{proof}

\begin{definition}\label{D:hom}
For every coalgebra $\alpha\colon X \to FX$ we define \textit{approximate homomorphisms} $h_n\colon X\to \mu F$ by induction on $n\in \N$ as follows: $h_0$ is the least element of $(\mu F)^X$, and given $h_n$ we put
$$
h_{n+1} \equiv X \xrightarrow{\ \alpha \ } FX \xrightarrow{\ Fh_n\ } F(\mu F) \xrightarrow{\ \varphi\ }\mu F\,.
$$
\end{definition}

\begin{corollary}\label{P:hom}
For every coalgebra $(X, \alpha)$ the unique homomorphism to $\nu F$ is the join of the $\omega$-chain of approximate homomorphisms in $(\nu F)^X$.
\end{corollary}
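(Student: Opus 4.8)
The plan is to recognize the coalgebra $\alpha\colon X\to FX$ as a particular recursive equation and then invoke the theorem just proved. First I would introduce the left-hand coproduct injection $\mathsf{inl}\colon FX\to FX+\nu F$ and set
$$
e \;\equiv\; X\xrightarrow{\ \alpha\ } FX\xrightarrow{\ \mathsf{inl}\ } FX+\nu F .
$$
Unravelling Definition~\ref{D:M}, a morphism $e^\dagger\colon X\to\nu F$ solves $e$ in the algebra $(\nu F,\tau^{-1})$ iff $e^\dagger=[\tau^{-1},\id]\cdot(Fe^\dagger+\id)\cdot\mathsf{inl}\cdot\alpha=\tau^{-1}\cdot Fe^\dagger\cdot\alpha$, that is, iff $\tau\cdot e^\dagger=Fe^\dagger\cdot\alpha$. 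Thus solutions of $e$ are exactly coalgebra homomorphisms $(X,\alpha)\to(\nu F,\tau)$, and since $\nu F$ is terminal there is precisely one such morphism, which by the preceding theorem equals $\bigsqcup_{n\in\N} m\cdot e^\dagger_n$ in $(\nu F)^X$, where $m\colon\mu F\to\nu F$ is the inclusion of Remark~\ref{R:mono}.

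Next I would identify the approximate solutions $e^\dagger_n$ of this particular $e$ with the approximate homomorphisms $h_n$ of Definition~\ref{D:hom}. Both $e^\dagger_0$ and $h_0$ are the least element of $(\mu F)^X$. For the induction step, observe that in the square defining $e^\dagger_{n+1}$ (see Definition~\ref{D:appr}) the composite $(\id+\partial_n)\cdot e=(\id+\partial_n)\cdot\mathsf{inl}\cdot\alpha=\mathsf{inl}\cdot\alpha$ lands in the left summand, so the $\partial_n$-part is never used and $e^\dagger_{n+1}=[\varphi,\id]\cdot(Fe^\dagger_n+\id)\cdot\mathsf{inl}\cdot\alpha=\varphi\cdot Fe^\dagger_n\cdot\alpha$. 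Comparing with $h_{n+1}=\varphi\cdot Fh_n\cdot\alpha$ and using the induction hypothesis $e^\dagger_n=h_n$ yields $e^\dagger_{n+1}=h_{n+1}$.

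Putting the two observations together, the unique homomorphism $X\to\nu F$ is $\bigsqcup_{n\in\N} m\cdot e^\dagger_n=\bigsqcup_{n\in\N} m\cdot h_n$, the join of the $\omega$-chain of approximate homomorphisms in $(\nu F)^X$. I do not expect a genuine obstacle here: the whole content has been settled in the theorem, and the corollary is merely its instantiation at recursive equations of the form $\mathsf{inl}\cdot\alpha$. The only point requiring a little care is that the chain $(h_n)$ lives in $(\mu F)^X$ and that post-composition with $m$ is what produces the $\omega$-chain in $(\nu F)^X$ whose join is taken — but this, together with the verification that $(h_n)$ really is an $\omega$-chain, is inherited verbatim from part~(3) of the proof of the theorem.
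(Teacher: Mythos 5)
Your proposal is correct and follows exactly the paper's own argument: form the equation $e=\operatorname{inl}\cdot\alpha$, observe that $(\id+\partial_n)\cdot\operatorname{inl}\cdot\alpha=\operatorname{inl}\cdot\alpha$ so that $e^\dagger_n=h_n$ by induction, note that solutions of $e$ are precisely coalgebra homomorphisms into $(\nu F,\tau)$, and invoke the preceding theorem. No differences worth noting.
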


\begin{proof}
Let $h\colon (X, \alpha) \to (\nu F, \tau)$ be the unique homomorphism. Form the recursive equation
$$
e\equiv X  \xrightarrow{\ \alpha \ } FX  \xrightarrow{\ \operatorname{inl} \ } FX + \nu F\,.
$$
Then $e^\dagger_n= h_n$ for every $n\in \N$. This is clear for $n=0$. The induction step follows from the square in Definition~\ref{D:appr}:
observe that $(\id +\partial_n)\cdot e= (\id +\partial_n)\cdot \operatorname{inl}\cdot \alpha = \operatorname{inl}\cdot \alpha$.

Moreover, $h$ is a solution of $e$: from $\tau \cdot h=Fh\cdot \alpha$ we get $h= \tau^{-1} \cdot Fh\cdot \alpha = \tau^{-1}\cdot (Fh+\id) \cdot \operatorname{inl}\cdot \alpha$, as required. Thus, our corollary follows from the preceding theorem.
\end{proof}


\begin{corollary}\label{C:appr}
Let $F$  be a finitary set functor. For every nonempty set $Y$ the solutions of recursive equations in the free iterative algebra $\Psi Y$ are obtained as joins of  $\omega$-chains of the approximate solutions in the free algebra $\Phi Y$.
\end{corollary}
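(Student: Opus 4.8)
The plan is to derive the statement by instantiating the theorem proved above on approximate solutions in $\nu F$ at the finitary set functor $G = F(-) + Y$. Since $Y \neq \emptyset$ we have $G\emptyset = F\emptyset + Y \neq \emptyset$, so that theorem applies to $G$. By Theorem~\ref{T:M}, $\nu G = \Psi Y$ is (with a component of $\tau_Y^{-1}$ as its $F$-algebra structure) the free completely iterative $F$-algebra on $Y$, and $\mu G = \Phi Y$ is the free algebra on $Y$. The element $p$ chosen in $F\emptyset + Y = G\emptyset$ is exactly the one defining the order by cutting on $\Phi Y$ and $\Psi Y$ with respect to the presentation $\varepsilon + \id_Y$ of $G$, and the cutting maps $\partial_n \colon \Psi Y \to \Phi Y$ of Remark~\ref{R:new} for $G$ are those used to form approximate solutions in $\Phi Y$.

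The single point that is not a verbatim substitution is that a recursive equation in the $F$-algebra $\Psi Y$ is a morphism $e \colon X \to FX + \Psi Y$, whereas the theorem applied to $G$ concerns recursive equations $\bar e \colon X \to GX + \nu G = (FX + Y) + \Psi Y$ in the completely iterative $G$-algebra $(\nu G, \tau_Y^{-1})$. I would therefore first record the straightforward reduction. Given $e$, put $\bar e \equiv X \xrightarrow{\ e\ } FX + \Psi Y \xrightarrow{\ \operatorname{inl} + \id\ } (FX + Y) + \Psi Y$, where $\operatorname{inl} \colon FX \to FX + Y$. Writing $\tau_Y^{-1} = [\alpha, \eta]$ with $\alpha$ the $F$-algebra structure of $\Psi Y$ and $\eta \colon Y \to \Psi Y$ the universal morphism, a short diagram chase — the same in spirit as part (1) of the theorem's proof, where $\hat k \cdot (e')^\dagger$ was shown to solve $e$ — shows that $e^\dagger$ itself solves $\bar e$ in $(\nu G, \tau_Y^{-1})$; by uniqueness of solutions in a completely iterative algebra, $\bar e^\dagger = e^\dagger$. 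An analogous induction on $k \in \N$ — now using that $\id + \partial_k$ and $[\varphi_G, \id] = [[\varphi, \eta_0], \id]$ decompose along $FX + Y$ exactly as $\id + \partial_k$ and $[\varphi, \id]$ do in Definition~\ref{D:appr}, where $\varphi$ is the $F$-algebra structure of $\Phi Y$ and $\eta_0 \colon Y \to \Phi Y$ the universal morphism — shows $\bar e^\dagger_k = e^\dagger_k$ for all $k$, with $e^\dagger_k \colon X \to \Phi Y$ the $k$-th approximate solution of $e$ computed in $\Phi Y$ by the evident instance of Definition~\ref{D:appr}.

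With this reduction the corollary is immediate: the theorem applied to $G$ says that $(\bar e^\dagger_k)_{k \in \N}$ is an $\omega$-chain in $(\mu G)^X = (\Phi Y)^X$ whose join, taken in $(\nu G)^X = (\Psi Y)^X$ via the inclusion $\Phi Y \hookrightarrow \Psi Y$, is $\bar e^\dagger$; hence $e^\dagger = \bigsqcup_{k \in \N} e^\dagger_k$ with $(e^\dagger_k)$ the approximate solutions in $\Phi Y$, as claimed. The only genuine work is the two diagram chases $\bar e^\dagger = e^\dagger$ and $\bar e^\dagger_k = e^\dagger_k$, and I expect both to be routine once the coproduct decompositions of $\tau_Y^{-1}$, $\varphi_G$ and $\partial_k$ are spelled out; the finitarity of $G$, the nonemptiness of $G\emptyset$, the coincidence of the two orders by cutting, and the $\omega$-chain and closure properties needed are all inherited from the general results of Section~\ref{sec4} and the theorem above.
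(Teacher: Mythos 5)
Your proposal is correct and follows the same route as the paper, which simply instantiates the preceding theorem at the functor $F(-)+Y$ with the chosen $p\in F\emptyset+Y$. The extra step you supply --- translating an $F$-recursive equation $e\colon X\to FX+\Psi Y$ into a $G$-recursive equation $\bar e=(\operatorname{inl}+\id)\cdot e$ for $G=F(-)+Y$ and checking $\bar e^\dagger=e^\dagger$ and $\bar e^\dagger_k=e^\dagger_k$ --- is bookkeeping the paper leaves implicit, and spelling it out is a legitimate (indeed helpful) completion of the same argument rather than a different one.
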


This is just an application of  the above theorem to the functor $F(-)+Y$ and a choice $p\in Y$ making $\Psi Y$ a poset by cutting.

\section{Conclusions and Open problems}\label{sec6} 

Terminal coalgebras of finitary set functors $F$ carry a canonical partial order which is a cpo whenever $F$ is bicontinuous. This was observed  by the author a long time ago.
  The present paper describes this order in a completely new manner, using the cutting of $\Sigma$-trees for a signature $\Sigma$ presenting $F$. In the bicontinuous case the terminal coalgebra is the conservative completion  of the initial algebra of $F$. Moreover the algebra structure of $\mu F$ determines the coalgebra structure of $\nu F$ as the unique continuous extension of the inverted map.
  
  The above results are applied to free completely iterative algebras $\Psi X$ for $F$ on all nonempty sets $X$.
  In the bicontinuous case $\Psi X$ is the conservative completion  of the free algebra $\Phi X$ on $X$, and the algebra structure of $\Psi X$ is the unique continuous extension of that of $\Phi X$. For finitary set functors in general, $\Phi X$ and $\Psi X$ have the same conservative  completions. We have demonstrated this on several examples of ``everyday'' finitary functors.
  Our main result is that solutions of recursive equations in $\Psi X$ can be obtained as joins of $\omega$-chains of (canonically defined) approximate solutions in $\Phi X$.
  
  It is an open problem whether an analogous result can be proved for accessible set functors in general. Another important question is whether there is a reasonable class of locally finitely presentable categories such that a similar order of free iterative algebras can be presented for every finitary endofunctor.

\bibliographystyle{plainurl}

\nocite{*}
\bibliography{iterative}

\end{document}